\newacronym{5G}{5G}{fifth generation}
\newacronym{6G}{6G}{sixth generation}
\newacronym{AP}{AP}{access point}
\newacronym[longplural={angles of arrival}]{AoA}{AoA}{angle of arrival}
\newacronym[longplural={angles of departure}]{AoD}{AoD}{angle of departure}
\newacronym{B5G}{B5G}{beyond 5G}
\newacronym{BCD}{BCD}{block coordinate descend}
\newacronym{BRPA}{BRPA}{balanced random pilot assignment}
\newacronym{CAP}{CAP}{compress-after-precoding}
\newacronym{CB}{CB}{conjugate beamforming}
\newacronym{CBP}{CBP}{compress-before-precoding}
\newacronym{CCDF}{CCDF}{complementary cumulative distribution function}
\newacronym{CDF}{CDF}{cumulative distribution function}
\newacronym{CoMP}{CoMP}{coordinated multipoint}
\newacronym{CPA}{CPA}{cluster-based pilot assignment}
\newacronym{CPU}{CPU}{central processing unit}
\newacronym{C-RAN}{C-RAN}{cloud radio access network}
\newacronym{CSI}{CSI}{channel state information}
\newacronym{DCPA}{DCPA}{dissimilarity cluster-based pilot assignment}
\newacronym{DL}{DL}{downlink}
\newacronym{GOPA}{GOPA}{globally optimal power allocation}
\newacronym{LS}{LS}{least-squares}
\newacronym{LOS}{LOS}{line-of-sight}
\newacronym{MIMO}{MIMO}{multiple-input multiple-output}
\newacronym{M-MIMO}{M-MIMO}{massive MIMO}
\newacronym{MMSE}{MMSE}{minimum mean square error}
\newacronym{mmWave}{mmWave}{millimeter wave}
\newacronym{MRC}{MRC}{maximal ratio combining}
\newacronym{MS}{MS}{mobile station}
\newacronym{MSE}{MSE}{mean square error}
\newacronym{MU-MIMO}{MU-MIMO}{multiuser-MIMO}
\newacronym{NCB}{NCB}{normalized conjugate beamforming}
\newacronym{NMRC}{NMRC}{normalized maximal ratio combiner}
\newacronym{NOPA}{NOPA}{normalized optimal power allocation}
\newacronym{NLOS}{NLOS}{non-line-of-sight}
\newacronym{QoS}{QoS}{quality of service}
\newacronym{rms}{rms}{root mean square}
\newacronym{RHS}{RHS}{right hand side}
\newacronym{RPA}{RPA}{random pilot assignment}
\newacronym{RRM}{RRM}{radio resource management}
\newacronym{RF}{RF}{radio frequency}
\newacronym{SINR}{SINR}{signal-to-interference-plus-noise ratio}
\newacronym{SNR}{SNR}{signal-to-noise ratio}
\newacronym{SOPA}{SOPA}{sequential optimal power allocation}
\newacronym{TDD}{TDD}{time division duplexing}
\newacronym{UDN}{UDN}{ultra dense network}
\newacronym{UL}{UL}{uplink}
\newacronym{ZF}{ZF}{zero-forcing}
\newcommand{\bs}{\boldsymbol}
\DeclareMathOperator{\SINR}{SINR}
\DeclareMathOperator{\diag}{diag}
\DeclareMathOperator{\tr}{tr}
\DeclareMathOperator{\PL}{PL}
\newtheorem{theorem}{Theorem}
\begin{document}

\title{Cell-Free Millimeter-Wave Massive MIMO Systems with Limited Fronthaul Capacity}

\author{Guillem~Femenias,~\IEEEmembership{Senior Member,~IEEE,}
        and~Felip~Riera-Palou,~\IEEEmembership{Senior Member,~IEEE}
\thanks{G Femenias and F Riera-Palou are with the Mobile Communications Group, University of the Balearic Islands, Palma 07122, Illes Balears, Spain (e-mail: \{guillem.femenias,felip.riera\}@uib.es).

This work has been funded in part by the Agencia Estatal de Investigaci\'on and Fondo Europeo de Desarrollo Regional (AEI/FEDER, UE) under project TERESA (subproject TEC2017-90093-C3-3-R), Ministerio de Econom\'ia y Competitividad (MINECO), Spain.}
}

%

\maketitle

\begin{abstract}

Network densification, massive multiple-input multiple-output (MIMO) and millimeter-wave (mmWave) bands have recently emerged as some of the physical layer enablers for the future generations of wireless communication networks (5G and beyond). Grounded on prior work on sub-6~GHz cell-free massive MIMO architectures, a novel framework for cell-free mmWave massive MIMO systems is introduced that considers the use of low-complexity hybrid precoders/decoders while factors in the impact of using capacity-constrained fronthaul links. A suboptimal pilot allocation strategy is proposed that is grounded on the idea of clustering by dissimilarity. Furthermore, based on mathematically tractable expressions for the per-user achievable rates and the fronthaul capacity consumption, max-min power allocation and fronthaul quantization optimization algorithms are proposed that, combining the use of block coordinate descent methods with sequential linear optimization programs, ensure a uniformly good quality of service over the whole coverage area of the network. Simulation results show that the proposed pilot allocation strategy eludes the computational burden of the optimal small-scale CSI-based scheme while clearly outperforming the classical random pilot allocation approaches. Moreover, they also reveal the various existing trade-offs among the achievable max-min per-user rate, the fronthaul requirements and the optimal hardware complexity (i.e., number of antennas, number of RF chains).

\end{abstract}

\begin{IEEEkeywords}
Cell-free, Massive MIMO, Millimeter Wave, Hybrid precoding, Constrained-capacity fronthaul
\end{IEEEkeywords}

%
\IEEEpeerreviewmaketitle

\section{Introduction}

\subsection{Motivation and previous work}

\IEEEPARstart{D}{riven} by the continuously increasing demands for high system throughput, low latency, ultra reliability, improved fairness and near-instant connectivity, \gls{5G} wireless communication networks are being standardized \cite{Shafi17} while, at the same time, insights and innovations from industry and academia are paving the road for the coming of the \gls{6G} \cite{David18}. As stated by Marzetta \emph{et al.} in \cite[Chapter 1]{Marzetta16}, there are three basic pillars at the physical layer that can be used to sustain the spectral and energy efficiencies that these networks are expected to provide: (i) employing massive \gls{MIMO}, (ii) using \gls{UDN} deployments, and (iii) exploiting new frequency bands.

Massive \gls{MIMO} systems, equipped with a large number of antenna elements, are intended to be used as \gls{MU-MIMO} arrangements in which the number of antenna elements at each \gls{AP} is much larger than the number of \glspl{MS} simultaneously served over the same time/frequency resources. The operation of massive \gls{MIMO} schemes is based on the availability of \gls{CSI} acquired through \gls{TDD} operation and the use of \gls{UL} pilot signals. Such a setting allows for very high spectral and energy efficiencies using simple linear signal processing in the form of conjugate beamforming or \gls{ZF} \cite{Marzetta10,Marzetta16}.

In \glspl{UDN}, a large number of \glspl{AP} deployed within a given coverage area cooperate to jointly transmit/receive to/from a (relatively) reduced number of \glspl{MS} thanks to the availability of high-performance low-latency fronthaul links connecting the \glspl{AP} to a central coordinating node. Coordination among \glspl{AP} can effectively control (or even eliminate) intercellular interference in an approach that was first referred to as network \gls{MIMO} \cite{Karakayali06,Gesbert10}, later led to the concept of \gls{CoMP} transmission \cite{Irmer11} and, more recently, to that of \gls{C-RAN} \cite{Checko15}. In a \gls{C-RAN}, the \glspl{AP}, which are treated as a distributed \gls{MIMO} system, are connected to a cloud-computing based \gls{CPU} in charge, among many others, of the baseband processing tasks of all \glspl{AP}. Conceptually similar to the \gls{C-RAN} architecture, but explicitly relying on assumptions specific of the massive \gls{MIMO} regime, distributed massive \gls{MIMO}-based \glspl{UDN} have been recently termed as \emph{cell-free massive \gls{MIMO}} networks \cite{Ngo15,Ngo17}. In these networks, a massive number of \glspl{AP} connected to a \gls{CPU} are distributed across the coverage area and, as in the cellular collocated massive \gls{MIMO} schemes, exploit the channel hardening and favorable propagation properties to coherently serve a large number of \glspl{MS} over the same time/frequency resources. Typically using simple linear signal processing schemes, they are claimed to provide uniformly good \gls{QoS} to the whole set of served \glspl{MS} irrespective of their particular location in the coverage area.

Since the microwave radio spectrum (from 300 MHz to 6 GHz) is highly congested, the use of massive antenna systems and network densification alone may not be sufficient to meet the \gls{QoS} demands in next generation wireless communications networks. Thus, another promising physical layer solution that is expected to play a pivotal role in \gls{5G} and beyond \gls{5G} communication systems is to increase the available spectrum by exploring new less-congested frequency bands. In particular, there has been a growing interest in exploiting the so-called \gls{mmWave} bands \cite{Rappaport13,Boccardi14,Akdeniz14,Rappaport17}. The available spectrum at these frequencies is orders of magnitude higher than that available at the microwave bands and, moreover, the very small wavelengths of \glspl{mmWave}, combined with the technological advances in low-power CMOS \gls{RF} miniaturization, allow for the integration of a large number of antenna elements into small form factors. Large antenna arrays can then be used to effectively implement \gls{mmWave} massive \gls{MIMO} schemes (see, for instance, \cite{Gao18,Busari18} and references therein) that, with appropriate beamforming, can more than compensate for the orders-of-magnitude increase in free-space path-loss produced by the use of higher frequencies.

The performance of cell-free massive \gls{MIMO} using conventional sub-6 GHz frequency bands and assuming infinite-capacity fronthaul links has been extensively studied in, for instance, \cite{Ngo17,Nayebi17,Nguyen17,Ngo18}. 
Cell-free massive \gls{MIMO} networks using capacity-constrained fronthaul links have also been considered in \cite{Bashar18b,Boroujerdi18} but assuming, again, the use of fully digital precoders in conventional sub-6 GHz frequency bands. 
Sub-6 GHz massive \gls{MIMO} systems are often assumed to implement a fully-digital baseband signal processing requiring a dedicated \gls{RF} chain for each antenna element. The present status of \gls{mmWave} technology, however, characterized by high-power consumption levels and high production costs, precludes the fully-digital implementation of massive \gls{MIMO} architectures, and typically forces \gls{mmWave} systems to rely on hybrid digital-analog signal processing architectures. In these hybrid transceiver architectures, a large antenna array connects to a limited number of \gls{RF} chains via high-dimensional \gls{RF} precoders, typically implemented using analog phase shifters and/or analog switches, and low-dimensional baseband digital precoders are then used at the output of the \gls{RF} chains \cite{Ayach14,Gao16,Molisch17}. The network of phase shifters connecting the array of antennas to the \gls{RF} chains determines whether the structure is fully or partially connected \cite{Park17TWC}. Thus, the assumptions, methods and analytical expressions in \cite{Ngo17,Nayebi17,Nguyen17,Ngo18,Bashar18b,Boroujerdi18} cannot by applied directly when assuming the use of \gls{mmWave} frequency bands. Despite its evident potential, as far as we know, besides \cite{Alonzo17,Alonzo18} there is no other research work on cell-free \gls{mmWave} massive \gls{MIMO} systems and, furthermore, the authors of these works did not face one of the main challenges in the implementation of cooperative \glspl{UDN}, that is, the fact that these systems require of a substantial information exchange between the \glspl{AP} and the \gls{CPU} via capacity-constrained fronthaul links. Moreover, they also considered the use of oversimplified \gls{mmWave} channel models and \gls{RF} precoding stages, without constraining the available number of \gls{RF}-chains at each \gls{AP}.

\subsection{Aim and contributions}

Motivated by the above considerations, our main aim in this paper is to address the design and performance evaluation of realistic cell-free \gls{mmWave} massive \gls{MIMO} systems using hybrid precoders and assuming the availability of capacity-constrained fronthaul links connecting the \glspl{AP} and the \gls{CPU}. The main contributions of our work can be summarized as follows:

\begin{itemize}[noitemsep,wide=0pt, leftmargin=\dimexpr\labelwidth + 2\labelsep\relax]

\item The performance of both the \gls{DL} and \gls{UL} of cell-free \gls{mmWave} massive \gls{MIMO} systems is considered with particular emphasis on the per-user rate, rather than the system sum-rate, by posing max-min fairness resource allocation problems that take into account the effects of imperfect channel estimation, power control, non-orthogonality of pilot sequences, and fronthaul capacity constraints. Instead of assuming the use of rather simple uniform quantization processes when forwarding information on the capacity-constrained fronthauls, the proposed optimization problems assume the use of large-block lattice quantization codes able to approximate a Gaussian quantization noise distribution. Optimal solutions to these problems are proposed that combine the use of block coordinate descent methods with sequential linear programs.

\item A hybrid beamforming implementation is proposed where the \gls{RF} high-dimensionality phase shifter-based precoding/decoding stage is based on large-scale second-order statistics of the propagation channel, and hence does not need the estimation of high-dimensionality instantaneous \gls{CSI}. The low-dimensionality baseband \gls{MU-MIMO} precoding/decoding stage can then be easily implemented by standard signal processing schemes using small-scale estimated \gls{CSI}. As will be shown in the numerical results section, such a reduced complexity hybrid precoding scheme, when combined with appropriate user selection, performs very well in the fronthaul capacity-constrained \gls{UDN} \gls{mmWave}-based scenarios under consideration.

\item A suboptimal pilot allocation strategy is proposed that, based on the idea of clustering by dissimilarity, avoids the computational complexity of the optimal pilot allocation scheme. The performance of the proposed \emph{dissimilarity cluster-based pilot assignment algorithm} is compared with that of both the \emph{pure random pilot allocation approach} and the \emph{balanced random pilot strategy}.

\item For those cases in which the number of active \glspl{MS} in the network is greater than the number of available \gls{RF} chains at a particular \gls{AP}, a \gls{MS} selection algorithm is proposed that aims at maximizing the minimum average sum-energy (i.e., Frobenius norm) of the equivalent channel between the \glspl{AP} and any of the active \glspl{MS}, constrained by the fact that each \gls{AP} can only beamform to a number of \glspl{MS} less or equal than the number of available \gls{RF} chains.

\end{itemize}

\subsection{Paper organization and notational remarks}

The remainder of this paper is organized as follows. In Section \ref{sec:System_model} the proposed cell-free \gls{mmWave} massive \gls{MIMO} system is introduced. Different subsections are devoted to the description of the channel model, the large-scale and small-scale training phases, the channel estimation process, and the \gls{DL} and \gls{UL} payload transmission phases. The achievable \gls{DL} and \gls{UL} rates are presented in Section \ref{sec:Achievable_rates} and further developed in Appendices \ref{app:Appendix_1} and \ref{app:Appendix_2}. Section \ref{sec:fronthaul_capacity} is dedicated to the calculation of the capacity consumption of both the \gls{DL} and \gls{UL} fronthaul links. The pilot assignment, power allocation and quantization optimization processes are described in Sections \ref{sec:pilot_assignment} and \ref{sec:power allocation_quantization}. Numerical results and discussions are provided in Section \ref{sec:numerical_results} and, finally, concluding remarks are summarized in Section \ref{sec:Conclusion}.

\emph{Notation}: Vectors and matrices are denoted by lower-case and upper-case boldface symbols. The $q$-dimensional identity matrix is represented by $\boldsymbol{I}_q$. The operator $\det(\bs{X})$ represents the determinant of matrix $\bs{X}$, $\tr(\bs{X})$ denotes its trace, $\|\bs{X}\|_F$ is its Frobenius norm, whereas $\bs{X}^{-1}$, $\bs{X}^T$, $\bs{X}^*$ and $\bs{X}^H$ denote its inverse, transpose, conjugate and conjugate transpose (also known as Hermitian), respectively. With a slight abuse of notation, the operator $\diag(\boldsymbol{x})$ is used to denote a diagonal matrix with the entries of vector $\boldsymbol{x}$ on its main diagonal, and the operator $\diag(\boldsymbol{X})$ is used to denote a vector containing the entries in the main diagonal of matrix $\bs{X}$. The expectation operator is denoted by $\mathbb{E}\{\cdot\}$. Finally, $\mathcal{CN}(\bs{m},\bs{R})$ denotes a circularly symmetric complex Gaussian vector distributions with mean $\bs{m}$ and covariance $\bs{R}$, $\mathcal{N}(0,\sigma^2)$ denotes a real valued zero-mean Gaussian random variable with standard deviation $\sigma$, and $\mathcal{U}[a,b]$ represents a random variable uniformly distributed in the range $[a,b]$.

\section{System model}
\label{sec:System_model}

Let us consider a cell-free massive \gls{MIMO} system where a \gls{CPU} coordinates the communication between $M$ \glspl{AP} and $K$ single-antenna \glspl{MS} randomly distributed in a large area. Each of the \glspl{AP} communicates with the \gls{CPU} via error-free fronthaul links with \gls{DL} and \gls{UL} capacities ${C_F}_d$ and ${C_F}_u$, respectively. Baseband processing of the transmitted/received signals is performed at the \gls{CPU}, while the \gls{RF} operations are carried out at the \glspl{AP}. Each \gls{AP} is equipped with an array of $N > K$ antennas and $L \leq N$ \gls{RF} chains. A fully-connected architecture is considered where each \gls{RF} chain is connected to the whole set of antenna elements using $N$ analog phase shifters. Without loss of essential generality, it is assumed in this paper that the number of active \gls{RF} chains at each of the \glspl{AP} in the network is equal to $L_A=\min\{K,L\}$. That is, if $K \leq L$, all \glspl{AP} in the cell-free network provide service to the whole set of \glspl{MS} and if $K > L$, instead, each \gls{AP} can only provide service to $L$ out of the $K$ \glspl{MS} in the network and, thus, an algorithm must be devised to decide which are the \glspl{MS} to be beamformed by each of the \glspl{AP}.

\begin{figure}
  \centering
  \includegraphics[width=7.8cm]{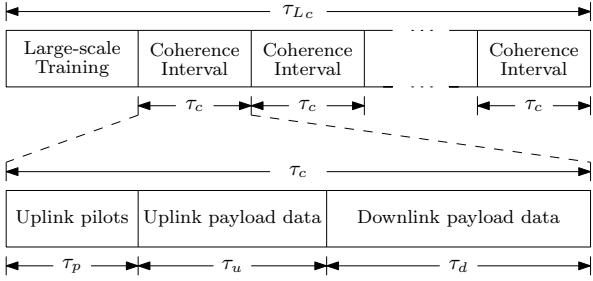}
  \caption{Allocation of the samples in large-scale and short-scale coherence intervals.}\label{fig:TDD_frame}
\end{figure}

The propagation channels linking the \glspl{AP} to the \glspl{MS} are typically characterized by small-scale parameters that are (almost) static over a coherence time-frequency interval of $\tau_c$ time-frequency samples (see \cite[Chapter 2]{Marzetta16}), and large-scale parameters (i.e., path loss propagation losses and covariance matrices) that can be safely assumed to be static over a time-frequency interval ${\tau_L}_c \gg \tau_c$. As shown in the following subsections, these channel characteristics can be leveraged to simplify both the channel estimation and the precoding/combining processes. In particular, \gls{DL} and \gls{UL} transmissions between \glspl{AP} and \glspl{MS} are organized in a half-duplex \gls{TDD} operation whereby each coherence interval is split into three phases, namely, the \gls{UL} training phase, the \gls{DL} payload data transmission phase and the \gls{UL} payload data transmission phase, and every \emph{large-scale coherence interval} ${\tau_L}_c$ the system performs an estimation of the large-scale parameters of the channel (see Fig. \ref{fig:TDD_frame}). In the \gls{UL} training phase, all \glspl{MS} transmit \gls{UL} training orthogonal pilots allowing the \glspl{AP} to estimate the propagation channels to every \gls{MS} in the network\footnote{Note that channel reciprocity can be exploited in \gls{TDD} systems and therefore only \gls{UL} pilots need to be transmitted.}. Subsequently, these channel estimates are used to detect the signals transmitted from the \glspl{MS} in the \gls{UL} payload data transmission phase and to compute the precoding filters governing the \gls{DL} payload data transmission. Not shown are guard intervals between \gls{UL} and \gls{DL} transmissions.

\subsection{Channel Model}
\label{subsec:Channel_model}

MmWave propagation is characterized by very high distance-based propagation losses that lead to sparse scattering multipath propagation. Furthermore, the use of mmWave transmitters and receivers with large tightly-packet antenna arrays results in high antenna correlation levels. These characteristics make most of the statistical channel models used in conventional sub-6 GHz \gls{MIMO} research work inaccurate when dealing with mmWave scenarios. Thus, a modified version of the discrete-time narrowband clustered channel model proposed by Akdeniz \emph{et al.} in \cite{Akdeniz14} and further extended by Samimi and Rappaport in \cite{Samimi14} will be used in this paper to capture the peculiarities of mmWave channels.

The link between the $m$th \gls{AP} and the $k$th \gls{MS} will be considered to be in one out of three possible conditions: outage, \gls{LOS} or \gls{NLOS} with probabilities:
\begin{subequations}
\begin{equation}
   p_{out}(d_{mk})=\max\left(0,1-e^{-a_{out}d_{mk}+b_{out}}\right),
\end{equation}
\begin{equation}
   p_{\gls{LOS}}(d_{mk})=\left(1-p_{out}(d_{mk})\right) e^{-a_{\gls{LOS}}d_{mk}},
\end{equation}
\begin{equation}
   p_{\gls{NLOS}}(d_{mk})=1-p_{out}(d_{mk})-p_{\gls{LOS}}(d_{mk}),
\end{equation}
\end{subequations}
respectively, where $d_{mk}$ is the distance (in meters) between the \gls{AP} and the \gls{MS}, and, according to \cite[Table I]{Akdeniz14}, $1/a_{out}=30$~m, $b_{out}=5.2$, and $1/a_{\gls{LOS}}=67.1$~m. Those links that are in outage will be characterized with infinite propagation losses, while for the links that are not in outage, the propagation losses will be characterized using a standard linear model with shadowing as
\begin{equation}
   \PL(d_{mk})[dB]=\alpha+10 \beta \log_{10}(d_{mk})+\chi_{mk},
\end{equation}
where $\alpha$ and $\beta$ are the least square fits of floating intercept and slope and depend on the carrier frequency and on whether the link is in \gls{LOS} or \gls{NLOS} (see \cite[Table I]{Akdeniz14}), and $\chi_{mk}$ denotes the large-scale shadow fading component, which is modelled as a zero mean spatially correlated normal random variable with standard deviation $\sigma_\chi$ (again, see \cite[Table I]{Akdeniz14} to obtain the typical values of $\sigma_\chi$ for \gls{LOS} and \gls{NLOS} links) whose spatial correlation model is described in \cite[(54)-(55)]{Ngo17}.

The \gls{UL} channel vector $\bs{h}_{mk} \in \mathbb{C}^{N \times 1}$ between \gls{MS} $k$ and \gls{AP} $m$ will be modelled as the sum of the contributions of $C_{mk}$ scattering clusters, each contributing $P_{mk}$ propagation paths as
\begin{equation}
   \bs{h}_{mk}=\sum_{c=1}^{C_{mk}}\sum_{p=1}^{P_{mk}} \alpha_{mk,cp} \bs{a}\left(\theta_{mk,cp},\phi_{mk,cp}\right),
\end{equation}
where $\alpha_{mk,cp}$ is the complex small-scale fading gain on the $p$th path of cluster $c$, and $\bs{a}\left(\theta_{mk,cp},\phi_{mk,cp}\right)$ represents the \gls{AP} normalized array response vector at the azimuth and elevation angles $\theta_{mk,cp}$ and $\phi_{mk,cp}$, respectively. These angles, as stated by Akdeniz \emph{et al.} in \cite[Section III.E]{Akdeniz14} can be generated as wrapped Gaussians around the cluster central angles with standard deviation given by the \gls{rms} angular spreads for the cluster. The azimuth cluster central angles are uniformly distributed in the range $[-\pi,\pi]$ and the elevation cluster central angles are set to the \gls{LOS} elevation angle. Moreover, the cluster \gls{rms} angular spreads are exponentially distributed with a mean equal to $1/\lambda_{\gls{rms}}$ that depends on the carrier frequency and on whether we are considering the azimuth or elevation directions (see \cite[Table I]{Akdeniz14}). The number of clusters is distributed as a random variable of the form
\begin{equation}
   C_{mk} \sim \max\left\{\text{Poisson}(\sigma_C),1\right\},
\end{equation}
where $\sigma_C$ is set to the empirical mean of $C_{mk}$. The small-scale fading gains are distributed as
\begin{equation}
   \alpha_{mk,cp} \sim \mathcal{CN}\left(0,\gamma_{mk,c}10^{-\PL(d_{mk})/10}\right),
\end{equation}
where the cluster $c$ is assumed to contribute with a fraction of power given by
\begin{equation}
   \gamma_{mk,c}=\frac{N \gamma'_{mk,c}}{P_{mk}\sum_{j=1}^{C_{mk}} \gamma'_{mk,j}},
\end{equation}
with
\begin{equation}
   \gamma'_{mk,j}=U_{mk,j}^{r_\tau-1} 10^{Z_{mk,j}/10},
\end{equation}
$U_{mk,j} \sim \mathcal{U}[0,1]$, $Z_{mk,j} \sim \mathcal{N}(0,\zeta^2)$, and the constants $r_\tau$ and $\zeta^2$ being treated as model parameters (see \cite[Table I]{Akdeniz14}).

Although the small-scale fading gains $\alpha_{mk,cp}$ are assumed to be static throughout the coherence interval and then change independently (i.e., block fading), the spatial covariance matrices
\begin{equation}
\begin{split}
   \bs{R}_{mk}=&\mathbb{E}\left\{\bs{h}_{mk} \bs{h}_{mk}^H \right\} \\
              =&10^{-\PL(d_{mk})/10}\sum_{c=1}^{C_{mk}}\gamma_{mk,c} \\
               &\times \sum_{p=1}^{P_{mk}} \bs{a}\left(\theta_{mk,cp},\phi_{mk,cp}\right)\bs{a}^H\left(\theta_{mk,cp},\phi_{mk,cp}\right),
\end{split}
\end{equation}
are assumed to vary at a much smaller pace (i.e., ${\tau_L}_c \gg \tau_c$).

\subsection{Large-scale training phase}
\label{subsec:large-scale-training}

\subsubsection{\gls{RF} precoder/combiner design}

In order to exploit the \gls{UL}/\gls{DL} channel reciprocity using the \gls{TDD} frame structure shown in Fig. \ref{fig:TDD_frame}, it is assumed in this paper that the $N \times L_A$ \gls{RF} matrix $\bs{W}_m^{RF}$, describing the effects of the active analog phase shifters at the $m$th \gls{AP}, is common to the \gls{DL} (\gls{RF} precoding phase) and \gls{UL} (\gls{RF} combining phase). Furthermore, denoting by $\mathcal{K}_m=\left\{\kappa_{m 1}, \ldots, \kappa_{m L_A}\right\}$ the set of $L_A$ \glspl{MS} beamformed by the $m$th \gls{AP}, it is assumed that $\bs{W}_m^{RF}$ is a function of only the spatial channel covariance matrices $\left\{\bs{R}_{mk}\right\}_{k\in\mathcal{K}_m}$, known at the $m$th \gls{AP} through spatial channel covariance estimation for hybrid analog-digital \gls{MIMO} precoding architectures (see e.g. \cite{Adhikary14,Mendez15,Park16,Park17}).

Using eigen-decomposition, the covariance matrix of the propagation channel linking \gls{MS} k and \gls{AP} $m$ can be expressed as $\bs{R}_{mk}=\bs{U}_{mk}\bs{\Lambda}_{mk}\bs{U}_{mk}^H$, where $\bs{\Lambda}_{mk}=\diag\left(\left[\lambda_{mk,1}\,\ldots\,\lambda_{mk,r_{mk}}\right]\right)$ contains the $r_{mk}$ non-null eigenvalues of $\bs{R}_{mk}$, and $\bs{U}_{mk}$ is the $N \times r_{mk}$ matrix of the corresponding eigenvectors. Hence, assuming the use of (constrained) statistical eigen beamforming \cite{Park17TSC,Mai18}, the analog \gls{RF} precoder/combiner can be designed as
\begin{equation}
\begin{split}
   \bs{W}_m^{RF}=&\begin{bmatrix}
                    \bs{w}_{m \kappa_{m 1}}^{RF} & \ldots & \bs{w}_{m \kappa_{m L_A}}^{RF}
                 \end{bmatrix}\\
                =&\begin{bmatrix}
                    e^{-j\angle\bs{u}_{m \kappa_{m 1},\max}} & \ldots & e^{-j\angle\bs{u}_{m \kappa_{m L_A},\max}}
                 \end{bmatrix},
\end{split}
\end{equation}
where $\bs{u}_{mk,\max}$ is the dominant eigenvector of $\bs{R}_{mk}$ associated to the maximum eigenvalue $\lambda_{mk,\max}$, and the function $\angle\bs{x}$ returns the phase angles, in radians, for each element of the complex vector $\bs{x}$. Note that using the \gls{RF} precoding/combining matrix, the equivalent channel vector between \gls{MS} k and \gls{AP} $m$, including the \gls{RF} precoding/decoding matrix, is defined as
\begin{equation}
   \bs{g}_{mk} = {\bs{W}_m^{RF}}^T \bs{h}_{mk}\ \in\ \mathbb{C}^{L_A \times 1},
\end{equation}
whose dimension is much less than the number of antennas of the massive \gls{MIMO} array used at the $m$th \gls{AP}, thus largely simplifying the small-scale training phase.

\subsubsection{Selection of \glspl{MS} to beamform from each \gls{AP}}

As previously stated, in those highly probable cases in which the number of active \glspl{MS} in the network is greater than the number of available \gls{RF} chains at each \gls{AP} (i.e., $K > L$), the $m$th \gls{AP}, with $m\in\{1,\ldots,M\}$, can only beamform to a group of $L$ out of the $K$ \glspl{MS} in the network, which are indexed by the set $\mathcal{K}_m=\left\{\kappa_{m1},\ldots,\kappa_{mL}\right\}$. As the \gls{RF} beamforming matrices at the \glspl{AP} are a function of only the large-scale spatial channel covariance matrices and are common to both the \gls{UL} and the \gls{DL}, the selection of the sets of \glspl{MS} to beamform from each \gls{AP} must also be based only on the available large-scale \gls{CSI}. Inspired by the Frobenius norm-based suboptimal user selection algorithm proposed by Shen \emph{et al.} in \cite{Shen06}, a selection algorithm is proposed that aims at maximizing the sum of the average energy (i.e., average Frobenius norm) of the equivalent channels (including the corresponding beamformer) between the $M$ \glspl{AP} and the $K$ \glspl{MS} with the constraints that, first, the minimum average energy of the equivalent channel between the $M$ \glspl{AP} and any of the active \glspl{MS} must be maximized and, second, that each \gls{AP} can only beamform to $L$ \glspl{MS}. Note that this optimization problem, which tends to provide some degree of (average) max-min fairness among \glspl{MS}, can be efficiently solved by using an iterative reverse-delete algorithm (similar to that used in graph theory to obtain a minimum spanning tree from a given connected, edge-weighted graph). In particular, at the beginning of the $i$th iteration of the algorithm the cell-free network is represented by a very simple edge-weighted directed graph with $M$ source nodes and $K$ sink nodes, where the $m$th source node, representing the $m$th \gls{AP}, is connected to a group $\mathcal{K}_m^{(i)}$ of sink nodes, representing the active \glspl{MS} beamformed by the $m$th \gls{AP}. The connection (edge) between the $m$th source node and the $l$th sink node in $\mathcal{K}_m^{(i)}$ is weighted by the average Frobenius norm of the equivalent channel linking the $m$th \gls{AP} and \gls{MS} $l\in \mathcal{K}_m^{(i)}$, that can be obtained as
\begin{equation}
   \xi_{ml}=\mathbb{E}\left\{\left\|{\bs{w}_{ml}^{RF}}^T\bs{h}_{ml}\right\|_F^2\right\}={\bs{w}_{ml}^{RF}}^T \bs{R}_{ml} \bs{w}_{ml}^{RF}.
\end{equation}
The average sum energy of the equivalent channels between the $M$ \glspl{AP} and \gls{MS} $k$ at the beginning of the $i$th iteration is
\begin{equation}
   \mathcal{E}_k^{(i)} = \sum_{m\in\mathcal{M}_k^{(i)}} \xi_{mk},
\end{equation}
where $\mathcal{M}_k^{(i)}$ is the set of \glspl{AP} beamforming to \gls{MS} $k$ at the beginning of the $i$th iteration. During this iteration, the reverse-delete algorithm removes the edge (i.e., the \gls{RF} chain and associated beamformer) that, first, goes out of one of those \glspl{AP} still beamforming to more than $L$ \glspl{MS} and, second, has the minimum weight maximizing the minimum average sum energy after removal. The algorithm begins with a fully connected graph and stops when all \glspl{AP} beamform to exactly $L$ \glspl{MS}. Hence, note that $M(K-L)$ iterations are needed to select the sets $\mathcal{K}_m$ for $m\in\{1,\ldots,M\}$.

\subsection{Small-scale training phase}
\label{subsec:small-scale-training}
Communication in any coherence interval of a \gls{TDD}-based massive \gls{MIMO} system invariably starts with the \glspl{MS} sending the pilot sequences to allow the channel to be estimated at the \glspl{AP}. Let $\tau_p$ denote the \gls{UL} training phase duration (measured in samples on a time-frequency grid) per coherence interval. During the \gls{UL} training phase, all $K$ \glspl{MS} simultaneously transmit pilot sequences of $\tau_p$ samples to the \glspl{AP} and thus, the $L_A \times \tau_p$ received \gls{UL} signal matrix at the $m$th \gls{AP} is given by
\begin{equation}
   {\bs{Y}_p}_m=\sqrt{\tau_p P_p}\sum_{k'=1}^K \bs{g}_{mk'} \bs{\varphi}_{k'}^T+{\bs{N}_p}_m,
\end{equation}
where $P_p$ is the transmit power of each pilot symbol, $\bs{\varphi}_k$ denotes the $\tau_p\times 1$ training sequence assigned to \gls{MS} $k$, with $\|\bs{\varphi}_k\|_F^2=1$, and ${\bs{N}_p}_m$ is an $L_A\times\tau_p$ matrix of i.i.d. additive noise samples with each entry distributed as\footnote{Note that in the \gls{UL} of a fully-connected hybrid beamforming architecture each reception chain is composed of $N$ antenna elements, each connected to a low-noise amplifier (LNA) characterized by a power gain $G_{\textrm{LNA}}$ and a noise temperature $T_{\text{LNA}}$. Each of the $N$ LNAs feeds an analog passive phase shifter characterized by an insertion loss $L_{\text{PS}}$. The outputs of the $N$ phase shifters are introduced to a power combiner whose insertion losses are typically proportional to the number of inputs, that is, $L_{\text{PC}}=N L_{\text{PC}_{in}}$. Finally, the output of the power combiner is introduced to an $\gls{RF}$ chain characterized by a power gain $G_{\text{RF}}$ and a noise temperature $T_{\text{RF}}$. Thus, the equivalent noise temperature of each receive chain can be obtained as $T_u=N \left(T_0+T_{\text{LNA}}+\frac{T_0(L_{\text{PS}}L_{\text{PC}_{in}}-1)}{G_{\text{LNA}}}+\frac{T_{\text{RF}}L_{\text{PS}}L_{\text{PC}_{in}}}{G_{\text{LNA}}}\right)$.}  $\mathcal{CN}(0,\sigma_u^2(N))$. Ideally, training sequences should be chosen to be mutually orthogonal, however, since in most practical scenarios it holds that $K>\tau_p$, a given training sequence is assigned to more than one \gls{MS}, thus resulting in the so-called pilot contamination, a widely studied phenomenon in the context of collocated massive \gls{MIMO} systems \cite{Elijah16}.

\subsection{Channel estimation}
\label{channel_estimation}

Channel estimation is known to play a central role in the performance of massive \gls{MIMO} schemes \cite{Lu14} and also in the specific context of cell-free architectures \cite{Ngo17}. 
The \gls{MMSE} estimation filter for the channel between the $k$th active \gls{MS} and the $m$th \gls{AP} can be calculated as
\begin{equation}
\begin{split}
   \bs{D}_{mk}&=\arg\min_{\bs{D}}\mathbb{E}\left\{\left\|\bs{g}_{mk}-\bs{D} {\bs{Y}_p}_m \bs{\varphi}_k^*\right\|^2\right\} \\
                          &= \sqrt{\tau_p P_p} \bs{R}_{mk}^{RF} \bs{Q}_{mk}^{-1},
\end{split}
\end{equation}
where
\begin{equation}
   \bs{R}_{mk}^{RF} = \mathbb{E}\left\{\bs{g}_{mk}  \bs{g}_{mk}^H\right\}={\bs{W}_m^{RF}}^T  \bs{R}_{mk} {\bs{W}_m^{RF}}^*,
\end{equation}
and
\begin{equation}
   \bs{Q}_{mk}=\tau_p P_p\sum_{k'=1}^K \bs{R}_{mk'}^{RF} \left|\bs{\varphi}_{k'}^T \bs{\varphi}_k^*\right|^2 + \sigma_u^2(N) \bs{I}_{L_A}.
\end{equation}
Hence, the corresponding estimated channel vector can be expressed as
\begin{equation}
   \hat{\bs{g}}_{mk}=\bs{D}_{mk} {\bs{Y}_p}_m \bs{\varphi}_k^* = \sqrt{\tau_p P_p} \bs{R}_{mk}^{RF} \bs{Q}_{mk}^{-1} {\bs{Y}_p}_m \bs{\varphi}_k^*.
\end{equation}
The \gls{MMSE} channel vector estimates can be shown to be distributed as ${\hat{\bs{g}}}_{mk} \sim \mathcal{CN}\left(\bs{0},{\hat{\bs{R}}}_{mk}^{RF}\right)$, where
\begin{equation}
   \hat{\bs{R}}_{mk}^{RF}\triangleq \tau_p P_p \bs{R}_{mk}^{RF}\bs{Q}_{mk}^{-1}{\bs{R}_{mk}^{RF}}^H.
\end{equation}
Furthermore, the channel vector $\bs{g}_{mk}$ can be decomposed as $\bs{g}_{mk}=\hat{\bs{g}}_{mk}+\tilde{\bs{g}}_{mk}$, where $\tilde{\bs{g}}_{mk}$ is the \gls{MMSE} channel estimation error, which is statistically independent of both $\bs{g}_{mk}$ and $\hat{\bs{g}}_{mk}$.

\subsection{Downlink payload data transmission}

Let us define $\bs{s}_d=\left[{s_d}_1 \ldots {s_d}_K\right]^T$ as the $K \times 1$ vector of symbols jointly (cooperatively) transmitted from the \glspl{AP} to the \glspl{MS}, such that $E\left\{\bs{s}_d\bs{s}_d^H\right\}=\bs{I}_K$. Let us also define $\bs{x}_m=\mathcal{P}_m\left(\bs{s}_d\right)$ as the $N \times 1$ vector of signals transmitted from the $m$th \gls{AP}, where $\mathcal{P}_m\left(\bs{s}_d\right)$ is used to denote the mathematical operations (linear and/or non-linear) used to obtain $\bs{x}_m$ from $\bs{s}_d$. Note that this vector must comply with a power constraint $\mathbb{E}\left\{\left\|\bs{x}_m\right\|_F^2\right\}\leq \overline{P}_m$, where $\overline{P}_m$ is the maximum average transmit power available at \gls{AP} $m$. Using this notation, the signal received by \gls{MS} $k$ can be expressed as
\begin{equation}
   {y_d}_k=\sum_{m=1}^M \bs{h}_{mk}^T \bs{x}_m + {n_d}_k,
\label{eq:yk}
\end{equation}
where ${n_d}_k \sim \mathcal{CN}(0,\sigma_d^2)$ is the Gaussian noise sample at \gls{MS} $k$. The vector $\bs{y}_d = \left[{y_d}_1\, \ldots\, {y_d}_K\right]^T$ containing the signals received by the $K$ scheduled \glspl{MS} in the network can then be expressed as
\begin{equation}
   \bs{y}_d=\sum_{m=1}^M \bs{H}_m^T \bs{x}_m + \bs{n}_d,
\end{equation}
where $\bs{H}_m=\left[\bs{h}_{m1}\, \ldots\, \bs{h}_{mK}\right]$ and $\bs{n}_d=\left[{n_d}_1\, \ldots\, {n_d}_K\right]^T$.

The mathematical operations that symbol vector $\bs{s}_d$ undergoes before being transmitted, generically represented as $\bs{x}_m = \mathcal{P}_m(\bs{s}_d)$, for all $m \in \{1,\ldots,M\}$, include, first, a baseband precoding task at the \gls{CPU}, second, a compressing process of all or part of the data that must be sent from the \gls{CPU} to the \glspl{AP} through the fronthaul links and, third, an \gls{RF} precoding task at each of the \glspl{AP}. Let us denote by ${\mathcal{Q}_d}_m(\bs{x})$ and ${\mathcal{Q}_d}_m^{-1}(\bs{x})$ the quantization and unquantization mathematical operations performed by the \gls{CAP}-based \gls{CPU}-\gls{AP} functional split on a vector of signal samples $\bs{x}$ to be transmitted by the $m$th \gls{AP}. Due to the distortion introduced by the quantization/unquantization processes, we have that \cite{Zamir96,Femenias18}
\begin{equation}
   \hat{\mathcal{Q}}_{dm}(\bs{x})\triangleq {\mathcal{Q}_d}_m^{-1}({\mathcal{Q}_d}_m(\bs{x}))=\bs{x}+{\bs{q}_d}_m,
\end{equation}
where ${\bs{q}_d}_m$ is the quantization noise vector, which is assumed to be statistically distributed as ${\bs{q}_d}_m \sim \mathcal{CN}\left(\bs{0},\sigma_{q_{dm}}^2 \bs{I}\right)$. As shown by Zamir \emph{et al.} in \cite{Zamir96}, this assumption is supported by the fact that large-block lattice quantization codes are able to approximate a Gaussian quantization noise distribution. Thus, the mathematical operations describing the \gls{CPU}-\gls{AP} functional split considered in this paper can be summarized as
\begin{equation}
\begin{split}
   \bs{x}_m=\mathcal{P}_m(\bs{s}_d)&=\bs{W}_m^{RF} \hat{\mathcal{Q}}_{dm}\left(\bs{W}_{d\,m}^{BB} \bs{\Upsilon}^{1/2} \bs{s}_d\right) \\
                                   &=\bs{W}_m^{RF} \left(\bs{W}_{d\,m}^{BB} \bs{\Upsilon}^{1/2} \bs{s}_d + {\bs{q}_d}_m\right),
\end{split}
\label{eq:bsxm}
\end{equation}
where $\bs{W}_d^{BB} = \left[{\bs{W}_{d\,1}^{BB}}^T\ \ldots\ {\bs{W}_{d\,M}^{BB}}^T\right]^T\ \in\ \mathbb{C}^{M L_A \times K}$, with $\bs{W}_{d\,m}^{BB}=\left[\bs{w}_{dm1}^{BB}\ \ldots\ \bs{w}_{dmK}^{BB}\right]\ \in\ \mathbb{C}^{L_A \times K}$ denoting the baseband precoding matrix affecting the signal transmitted by the $m$th \gls{AP}, and $\bs{\Upsilon}=\diag\left([\upsilon_1\,\ldots\,\upsilon_K]\right)$ is a $K \times K$ diagonal matrix containing the power control coefficients in its main diagonal, which are chosen to satisfy the following necessary power constraint at the $m$th \gls{AP}
\begin{equation}
\begin{split}
   \mathbb{E}\left\{\left\|\bs{x}_m\right\|_F^2\right\} &= \sum_{k=1}^K \upsilon_k \theta_{mk}^{BB/RF}+{\sigma_q^2}_{dm} \left\|\bs{W}_m^{RF}\right\|_F^2 \\
                                                        &= \sum_{k=1}^K \upsilon_k \theta_{mk}^{BB/RF} + {\sigma_q^2}_{dm} L_A N \leq \overline{P}_m,
\end{split}
\label{eq:Power_constraint}
\end{equation}
where we have used the definition
\begin{equation}
   \theta_{mk}^{BB/RF}=\mathbb{E}\left\{\left\|\bs{W}_m^{RF} \bs{w}_{dmk}^{BB}\right\|_F^2\right\}.
\end{equation}

Using the proposed hybrid \gls{CAP} approach, the signal received by the $K$ \glspl{MS} can be rewritten as
\begin{equation}
\begin{split}
   \bs{y}_d=&\sum_{m=1}^M \bs{H}_m^T \bs{W}_m^{RF} \bs{W}_{d\,m}^{BB} \bs{\Upsilon}^{1/2} \bs{s}_d \\
            &+ \sum_{m=1}^M \bs{H}_m^T \bs{W}_m^{RF} {\bs{q}_d}_m + \bs{n}_d \\
           =&\ \bs{G}^T \bs{W}_d^{BB}\bs{\Upsilon}^{1/2} \bs{s}_d + \bs{\eta}_d,
\end{split}
\label{eq:yd}
\end{equation}
where $\bs{G}=[\bs{G}_1^T\,\ldots\,\bs{G}_M^T]^T$, with $\bs{G}_m={\bs{W}_m^{RF}}^T \bs{H}_m$, representing the equivalent \gls{MIMO} channel matrix between
the $K$ \glspl{MS} and the $M$ \glspl{AP}, including the RF precoding/decoding matrices, and
\begin{equation}
   \bs{\eta}_d  =\bs{G}^T \bs{q}_d + \bs{n}_d,
\end{equation}
with $\bs{q}_d=[{\bs{q}_d}_1^T \ldots {\bs{q}_d}_M^T]^T$, includes the thermal noise as well as the quantization noise samples received from all the \glspl{AP} in the network. Now, using the classical \gls{ZF} \gls{MU-MIMO} baseband precoder to harness the spatial multiplexing, we have that
\begin{equation}
   \bs{W}_d^{BB}=\hat{\bs{G}}^*\left(\hat{\bs{G}}^T \hat{\bs{G}}^*\right)^{-1}
\end{equation}
or, equivalently,
\begin{equation}
   \bs{W}_{d\,m}^{BB} = \hat{\bs{G}}_m^*\left(\hat{\bs{G}}^T \hat{\bs{G}}^*\right)^{-1}\ \forall m,
\end{equation}
where we have assumed that $\bs{G}=\hat{\bs{G}}+\tilde{\bs{G}}$ and $\bs{G}_m=\hat{\bs{G}}_m+\tilde{\bs{G}}_m$. Consequently, the signal received by the $k$th \gls{MS} can be expressed as
\begin{equation}
\begin{split}
   {y_d}_k = &\bs{g}_k^T \hat{\bs{G}}^*\left(\hat{\bs{G}}^T \hat{\bs{G}}^*\right)^{-1} \bs{\Upsilon}^{1/2}\bs{s}_d + {\eta_d}_k \\
           = &\left(\hat{\bs{g}}_k^T + \tilde{\bs{g}}_k^T\right) \hat{\bs{G}}^*\left(\hat{\bs{G}}^T \hat{\bs{G}}^*\right)^{-1} \bs{\Upsilon}^{1/2}\bs{s}_d + {\eta_d}_k \\
           = &\sqrt{\upsilon_k} {s_d}_k + \tilde{\bs{g}}_k^T \hat{\bs{G}}^*\left(\hat{\bs{G}}^T \hat{\bs{G}}^*\right)^{-1} \bs{\Upsilon}^{1/2}\bs{s}_d + {\eta_d}_k
\end{split}
\label{eq:ydk}
\end{equation}
where ${\eta_d}_k = \bs{g}_k^T \bs{q}_d + {n_d}_k$. The first term denotes the useful received signal, the second term contains the interference terms due to the use of imperfect \gls{CSI} (pilot contamination), and the third term encompass both the quantification and thermal noise samples.

\subsection{Uplink payload data transmission}

In the \gls{UL}, the vector of received signals at the output of the $L_A$ \gls{RF} chains (including the \gls{RF} phase shifters) of the $m$th \gls{AP} is given by
\begin{equation}
\begin{split}
   {\bs{r}_u}_m=&\sqrt{P_u}\sum_{k'=1}^K \bs{g}_{mk'} \sqrt{\omega_{k'}} {s_u}_{k'} + {\bs{n}_u}_m \\
               =&\sqrt{P_u}\bs{G}_m \bs{\Omega}^{1/2} \bs{s}_u + {\bs{n}_u}_m,
\end{split}
\end{equation}
where $P_u$ is the maximum average \gls{UL} transmit power available at any of the active \glspl{MS}, $\bs{s}_u=[{s_u}_1\,\ldots\,{s_u}_K]^T$ denotes the vector of symbols transmitted by the $K$ active \gls{MS}, $\bs{\Omega}=\diag([\omega_1\,\ldots\,\omega_K])$, with $0 \leq \omega_k \leq 1$, is a matrix containing the power control coefficients used at the \glspl{MS}, and ${\bs{n}_u}_m \sim \mathcal{CN}(\bs{0},\sigma_u^2(N) \bs{I}_{L_A})$ is the vector of additive thermal noise samples at the output of the $L_A$ \gls{RF} chains of the $m$th \gls{AP}. The received vector of signals at each of the \glspl{AP} in the network is quantized and forwarded to the \gls{CPU} via the \gls{UL} fronthaul links, where they are unquantized and jointly processed using a set of baseband combining vectors. Using a similar approach to that employed to model the \gls{DL} transmission, the received vector of (unquantized) samples from the $m$th \gls{AP} can be expressed as
\begin{equation}
   {\bs{z}_u}_m=\hat{\mathcal{Q}}_{um}\left({\bs{r}_u}_m\right)={\bs{r}_u}_m + {\bs{q}_u}_m,
\end{equation}
where ${\bs{q}_u}_m$ is the quantization noise vector, which is assumed to be statistically distributed as ${\bs{q}_u}_m \sim \mathcal{CN}\left(\bs{0},\sigma_{q_{u m}}^2 \bs{I}_{L_A}\right)$. Now, assuming the use of \gls{ZF} \gls{MIMO} detection, the \gls{CPU} uses the detection matrix
\begin{equation}
   \bs{W}_u^{BB}=\left(\hat{\bs{G}}^H \hat{\bs{G}}\right)^{-1}\hat{\bs{G}}^H = {\bs{W}_d^{BB}}^T
\end{equation}
or, equivalently
\begin{equation}
   \bs{W}_{u m}^{BB}=\left(\hat{\bs{G}}^H \hat{\bs{G}}\right)^{-1}\hat{\bs{G}}_m^H = {\bs{W}_{d m}^{BB}}^T,\ \forall m,
\end{equation}
to jointly process the vector $\bs{z}_u=\left[{\bs{z}_u}_1^T\,\ldots\,{\bs{z}_u}_M^T\right]^T$ and obtain the vector of detected samples
\begin{equation}
\begin{split}
   \bs{y}_u=&\bs{W}_u^{BB}\bs{z}_u=\sqrt{P_u}\bs{W}_u^{BB} \bs{G} \bs{\Omega}^{1/2} \bs{s}_u + \bs{\eta}_u \\
           =&\sqrt{P_u} \bs{\Omega}^{1/2} \bs{s}_u + \sqrt{P_u}\bs{W}_u^{BB} \tilde{\bs{G}} \bs{\Omega}^{1/2} \bs{s}_u + \bs{\eta}_u,
\end{split}
\end{equation}
where $\bs{\eta}_u=\bs{W}_u^{BB}\left(\bs{q}_u + \bs{n}_u\right)$. Again, the first term denotes the useful received signal, the second term contains the interference terms due to the use of imperfect \gls{CSI}, and the third term includes both the quantification and thermal noise samples. The detected sample corresponding to the symbol transmitted by the $k$th \gls{MS} can then be obtained as
\begin{equation}
   {y_u}_k =\sqrt{P_u} \omega_k^{1/2} {s_u}_k + \sqrt{P_u}\left[\bs{W}_u^{BB} \tilde{\bs{G}} \bs{\Omega}^{1/2} \bs{s}_u\right]_k + {\eta_u}_k,
\label{eq:yuk}
\end{equation}
where $[\bs{x}]_k$ denotes the $k$th entry of vector $\bs{x}$.

\section{Achievable rates}
\label{sec:Achievable_rates}

Analysis techniques similar to those applied, for instance, in \cite{Hassibi03,Yang13,Interdonato16,Marzetta16,Ngo17,Nayebi17}, are used in this section to derive \gls{DL} and \gls{UL} achievable rates. In particular, the sum of the second and third terms on the \gls{RHS} of \eqref{eq:ydk}, for the \gls{DL} case, and \eqref{eq:yuk}, for the \gls{UL} case, are treated as \emph{effective noise}. The additive terms constituting the \emph{effective noise} are, in both \gls{DL} and \gls{UL} cases, mutually uncorrelated, and uncorrelated with ${s_d}_k$ and ${s_u}_k$, respectively. Therefore, both the desired signal and the so-called \emph{effective noise} are uncorrelated. Now, recalling the fact that uncorrelated Gaussian noise represents the worst case, from a capacity point of view, and that the complex-valued fast fading random variables characterizing the propagation channels between different pairs of \gls{AP}-\gls{MS} connections are independent, the \gls{DL} and \gls{UL} achievable rates (measured in bits per second per Hertz) for \gls{MS} $k$ can be obtained as stated in the following theorems:
\begin{theorem}[Downlink achievable rate] An achievable rate of \gls{MS} k using the analog precoders $\bs{W}_m^{RF}$, for all $m\in\{1,\ldots,M\}$, and the \gls{ZF} baseband precoder $\bs{W}_d^{BB} = \hat{\bs{G}}^*\left(\hat{\bs{G}}^T \hat{\bs{G}}^*\right)^{-1}$ is ${R_d}_k=\log_2\left(1+{\SINR_d}_k\right)$, with
\begin{equation}
   {\SINR_d}_k=\frac{\upsilon_k}{\sum_{k'=1}^K \upsilon_{k'} \varpi_{kk'} + \sigma_{\eta_{dk}}^2},
   \label{eq:SINRdk}
\end{equation}
where
\begin{equation}
   \sigma_{\eta_{dk}}^2=\sum_{m=1}^M {\sigma_q^2}_{dm} \tr\left(\bs{R}_{mk}^{RF}\right) + \sigma_d^2,
\end{equation}
and
\begin{equation}
   \varpi_{kk'}=\left[\diag\left(\mathbb{E}\left\{{\bs{W}_d^{BB}}^H \tilde{\bs{g}}_k^* \tilde{\bs{g}}_k^T\bs{W}_d^{BB}\right\}\right)\right]_{k'}.
\end{equation}
\label{theo:theorem_DL}
\end{theorem}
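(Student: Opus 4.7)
The plan is to apply the standard worst-case uncorrelated additive noise capacity lower bound (the so-called use-and-then-forget bound used in, e.g., Marzetta \emph{et al.} and Ngo \emph{et al.}) to the scalar observation \eqref{eq:ydk} of MS $k$. The decomposition is already laid out in that equation: the term $\sqrt{\upsilon_k}\,{s_d}_k$ is taken as the useful signal, while the sum $\tilde{\bs{g}}_k^T \hat{\bs{G}}^*(\hat{\bs{G}}^T\hat{\bs{G}}^*)^{-1}\bs{\Upsilon}^{1/2}\bs{s}_d + {\eta_d}_k$ is treated as effective noise. First I would verify that the useful signal coefficient is exactly deterministic: since $\hat{\bs{G}}^T \bs{W}_d^{BB} = \bs{I}_K$ by construction of the ZF precoder, the $k$th row $\hat{\bs{g}}_k^T \bs{W}_d^{BB}$ collapses to $\bs{e}_k^T$, yielding signal power $\upsilon_k$ and removing the usual channel-hardening step. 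Then I would check that the effective noise is uncorrelated with ${s_d}_k$, which follows from (i) independence of $\bs{s}_d$ from the channels, quantization noise, and thermal noise, (ii) the orthogonality $\mathbb{E}\{\tilde{\bs{g}}_k\,\hat{\bs{g}}_k^H\}=\bs{0}$ coming from the MMSE property recalled in Section \ref{channel_estimation}, and (iii) independence of $\bs{q}_d$ and $\bs{n}_d$ from the data symbols. By the worst-case Gaussian noise result, an achievable rate is then $\log_2(1+\mathrm{SINR}_{d,k})$ with SINR given by signal power divided by effective noise variance.

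Next I would compute the two components of the effective noise variance separately. For the channel-estimation-error interference term, since the entries of $\bs{s}_d$ are zero-mean, unit-variance and mutually uncorrelated, the variance conditioned on averaging over $\tilde{\bs{g}}_k$ and $\hat{\bs{G}}$ jointly is
\begin{equation*}
\mathbb{E}\!\left\{\Bigl|\tilde{\bs{g}}_k^T \bs{W}_d^{BB} \bs{\Upsilon}^{1/2}\bs{s}_d\Bigr|^2\right\} = \sum_{k'=1}^{K}\upsilon_{k'}\,\mathbb{E}\!\left\{\bigl|[\tilde{\bs{g}}_k^T \bs{W}_d^{BB}]_{k'}\bigr|^2\right\},
\end{equation*}
and rewriting the scalar $|[\tilde{\bs{g}}_k^T\bs{W}_d^{BB}]_{k'}|^2$ as the $(k',k')$ diagonal entry of the outer product ${\bs{W}_d^{BB}}^H\tilde{\bs{g}}_k^*\tilde{\bs{g}}_k^T\bs{W}_d^{BB}$ yields the stated $\varpi_{kk'}$. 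This gives the interference term $\sum_{k'}\upsilon_{k'}\varpi_{kk'}$ in the denominator of \eqref{eq:SINRdk}.

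For the remaining term ${\eta_d}_k = \bs{g}_k^T \bs{q}_d + {n_d}_k$, I would use that ${\bs{q}_d}_m$ are mutually independent across $m$ and independent of the physical channels, with ${\bs{q}_d}_m \sim \mathcal{CN}(\bs{0},{\sigma_q^2}_{dm}\bs{I}_{L_A})$. Taking expectation first over $\bs{q}_d$ for fixed $\bs{g}_k$ and then over $\bs{g}_{mk}$ gives
\begin{equation*}
\mathbb{E}\!\left\{\bigl|\bs{g}_{mk}^T {\bs{q}_d}_m\bigr|^2\right\} = {\sigma_q^2}_{dm}\,\mathbb{E}\{\|\bs{g}_{mk}\|^2\} = {\sigma_q^2}_{dm}\,\tr(\bs{R}_{mk}^{RF}),
\end{equation*}
using $\mathbb{E}\{\bs{g}_{mk}\bs{g}_{mk}^H\}=\bs{R}_{mk}^{RF}$ from Section \ref{subsec:large-scale-training}. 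Summing over $m$ and adding the thermal contribution $\sigma_d^2$ produces $\sigma_{\eta_{dk}}^2$. Combining the three pieces (signal power, estimation-error interference, and effective additive noise) gives \eqref{eq:SINRdk}.

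The main obstacle I anticipate is not any single calculation but justifying that all cross terms between the three effective-noise constituents vanish. In particular, one has to argue that the estimation-error term is uncorrelated with $\bs{g}_k^T \bs{q}_d$ (which follows because $\bs{q}_d$ is independent of every channel and of $\bs{s}_d$), and that the worst-case Gaussian noise argument remains valid when the effective noise is neither Gaussian nor independent across users. For a single-user scalar lower bound of MS $k$, however, only uncorrelatedness with the useful signal is needed, so this reduces to carefully invoking the MMSE orthogonality principle and the statistical independence assumptions on $\bs{q}_d$ and $\bs{n}_d$; these appendix-level details are likely where the paper delegates the bookkeeping.
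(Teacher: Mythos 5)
Your proposal is correct and follows essentially the same route as the paper's Appendix A: the identical decomposition of \eqref{eq:ydk} into useful signal, estimation-error interference, quantization noise and thermal noise, the worst-case uncorrelated additive noise bound, and term-by-term variance computations yielding $\upsilon_k$, $\sum_{k'}\upsilon_{k'}\varpi_{kk'}$ and $\sum_m {\sigma_q^2}_{dm}\tr(\bs{R}_{mk}^{RF})+\sigma_d^2$. Your extra care with the ZF identity $\hat{\bs{G}}^T\bs{W}_d^{BB}=\bs{I}_K$ and the MMSE orthogonality is consistent with (and slightly more explicit than) what the paper states.
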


\begin{proof}
See Appendix \ref{app:Appendix_1}.
\end{proof}

\begin{theorem}[Uplink achievable rate] An achievable \gls{UL} rate for the $k$th \gls{MS} in the Cell-Free Massive \gls{MIMO} system with limited capacity fronthaul links and using \gls{ZF} \gls{MIMO} detection, for any $M$, $N$ and $K$, is given by ${R_u}_k=\log_2\left(1+{\SINR_u}_k\right)$, with
\begin{equation}
   {\SINR_u}_k=\frac{P_u \omega_k}{P_u \sum_{k'=1}^K \omega_{k'} \delta_{kk'} + \sigma_{\eta_{uk}}^2},
   \label{eq:SINRuk}
\end{equation}
where
\begin{equation}
   \delta_{kk'}=\left[\diag\left(\mathbb{E}\left\{\tilde{\bs{G}}^H{\bs{w}_{uk}^{BB}}^H \bs{w}_{uk}^{BB} \tilde{\bs{G}}\right\}\right)\right]_{k'}
\end{equation}
with $\bs{w}_{uk}^{BB}$ denoting the $k$th row of $\bs{W}_u^{BB}$, or, equivalently,
\begin{equation}
   \delta_{kk'}=\left[\diag\left(\mathbb{E}\left\{\bs{W}_u^{BB} \tilde{\bs{g}}_{k'} \tilde{\bs{g}}_{k'}^H{\bs{W}_u^{BB}}^H\right\}\right)\right]_k,
\end{equation}
and
\begin{equation}
   \sigma_{\eta_{uk}}^2=\sum_{m=1}^M\left({\sigma_q^2}_{um} +\sigma_u^2(N)\right){\nu_u}_{mk},
\end{equation}
with
\begin{equation}
   {\nu_u}_{mk}=\left[\diag\left(\mathbb{E}\left\{\bs{W}_{u\,m}^{BB} {\bs{W}_{u\,m}^{BB}}^H\right\}\right)\right]_k.
\end{equation}
\label{theo:theorem_UL}
\end{theorem}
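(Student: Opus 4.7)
My plan is to follow the standard ``use-and-then-forget'' capacity bounding technique (as in Marzetta's work and as used for the DL in Theorem \ref{theo:theorem_DL}): rewrite the post-processing sample in \eqref{eq:yuk} as a deterministic-gain desired symbol plus an effective noise term that is uncorrelated with the symbol, then invoke the fact that for a fixed second moment, Gaussian noise minimizes the mutual information, giving a lower bound of $\log_2(1+\SINR)$.

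Concretely, I would first observe that the ZF baseband combiner satisfies $\bs{W}_u^{BB}\hat{\bs{G}}=\bs{I}_K$, so in \eqref{eq:yuk} the coefficient multiplying $s_{u,k}$ is the \emph{deterministic} scalar $\sqrt{P_u\,\omega_k}$, producing a desired-signal power of $P_u\,\omega_k$. The remaining two terms then constitute the effective noise: the residual interference $\sqrt{P_u}\,[\bs{W}_u^{BB}\tilde{\bs{G}}\bs{\Omega}^{1/2}\bs{s}_u]_k$ stemming from the MMSE estimation error, and the combined thermal/quantization term $\eta_{u,k}=[\bs{W}_u^{BB}(\bs{q}_u+\bs{n}_u)]_k$. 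To justify uncorrelatedness with $s_{u,k}$, I would invoke three standard facts: (i) MMSE orthogonality, so $\tilde{\bs{g}}_{k'}$ is uncorrelated with (and here, jointly Gaussian so independent of) $\hat{\bs{G}}$, and hence of $\bs{W}_u^{BB}$; (ii) the data symbols $\{s_{u,k'}\}$ are mutually independent, zero-mean, unit-variance; (iii) $\bs{q}_u$ and $\bs{n}_u$ are zero-mean and independent of the data and of $\bs{G}$ by the quantization model and the noise assumption.

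I would then compute each second moment separately. For the interference term, using $[\bs{W}_u^{BB}\tilde{\bs{G}}]_{k,k'}=\bs{w}_{uk}^{BB}\tilde{\bs{g}}_{k'}$ together with the independence of symbols across $k'$, the power reduces to $P_u\sum_{k'}\omega_{k'}\,\mathbb{E}\{|\bs{w}_{uk}^{BB}\tilde{\bs{g}}_{k'}|^2\}$, and the two equivalent expressions for $\delta_{kk'}$ follow by writing $|\bs{w}_{uk}^{BB}\tilde{\bs{g}}_{k'}|^2$ either as $\bs{w}_{uk}^{BB}\tilde{\bs{g}}_{k'}\tilde{\bs{g}}_{k'}^H(\bs{w}_{uk}^{BB})^H$ or, by taking the $k$th diagonal entry, as $[\bs{W}_u^{BB}\tilde{\bs{g}}_{k'}\tilde{\bs{g}}_{k'}^H{\bs{W}_u^{BB}}^H]_{k,k}$. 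For the noise term, I would partition $\bs{W}_u^{BB}=[\bs{W}_{u,1}^{BB}\,\ldots\,\bs{W}_{u,M}^{BB}]$ so that $\eta_{u,k}=\sum_m[\bs{W}_{u,m}^{BB}(\bs{q}_{u,m}+\bs{n}_{u,m})]_k$; then, because the quantization and thermal noise vectors at distinct APs are mutually independent with covariances $\sigma_{q_{um}}^2\bs{I}_{L_A}$ and $\sigma_u^2(N)\bs{I}_{L_A}$ respectively, the total variance collapses to $\sum_m(\sigma_{q_{um}}^2+\sigma_u^2(N))\,[\bs{W}_{u,m}^{BB}{\bs{W}_{u,m}^{BB}}^H]_{k,k}$, and averaging yields $\sigma_{\eta_{uk}}^2$ as stated.

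The main subtlety, and the place I would be most careful, is justifying that the worst-case Gaussian-noise bound is actually applicable here. The effective noise is \emph{not} independent of $s_{u,k}$ in the strict sense --- it depends on $\bs{W}_u^{BB}$, which in turn depends on $\hat{\bs{G}}$ and hence on pilot observations that share the coherence block with the data. What is true is that the effective noise is \emph{uncorrelated} with $s_{u,k}$ and has a well-defined second moment, which is precisely the hypothesis needed for the standard lower bound on the mutual information of the scalar channel (see the references cited at the start of Section \ref{sec:Achievable_rates}). I would therefore explicitly verify the three uncorrelatedness claims listed above before appealing to that bound; once that is done, the factorization of the SINR expression \eqref{eq:SINRuk} follows immediately from the three variance computations.
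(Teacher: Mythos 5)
Your proposal is correct and follows essentially the same route as the paper's proof in Appendix~\ref{app:Appendix_2}: the same decomposition of ${y_u}_k$ into desired signal, residual interference from $\tilde{\bs{G}}$, and quantization/thermal noise, the same appeal to the worst-case uncorrelated-additive-noise bound of \cite{Hassibi03}, and the same per-term second-moment computations yielding $\delta_{kk'}$ (in both equivalent forms) and $\sigma_{\eta_{uk}}^2$. Your extra care in distinguishing uncorrelatedness from independence when invoking the bound is a welcome refinement of the paper's briefer justification, but it does not change the argument.
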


\begin{proof}
See Appendix \ref{app:Appendix_2}.
\end{proof}

\section{Fronthaul capacity consumption}
\label{sec:fronthaul_capacity}

The \gls{DL} quantization process performed at the $m$th \gls{AP} can be expressed as
\begin{equation}
   \hat{\mathcal{Q}}_{dm}\left(\bs{W}_{d\,m}^{BB}\bs{\Upsilon}^{1/2}\bs{s}_d\right)=\bs{W}_{d\,m}^{BB}\bs{\Upsilon}^{1/2}\bs{s}_d + {\bs{q}_d}_m.
\end{equation}
From standard random coding arguments \cite{Cover06}, vector $\bs{s}_d$ can be safely assumed to be distributed as $\bs{s}_d\sim\mathcal{CN}(0,\bs{I}_K)$ and thus, the quantized vector $\hat{\mathcal{Q}}_{dm}\left(\bs{W}_{d\,m}^{BB}\bs{\Upsilon}^{1/2}\bs{s}_d\right)$ is distributed as $\hat{\mathcal{Q}}_{dm}\left(\bs{W}_{d\,m}^{BB}\bs{\Upsilon}^{1/2}\bs{s}_d\right)\sim\mathcal{CN}\left(\bs{0},\bs{W}_{d\,m}^{BB}\bs{\Upsilon} {\bs{W}_{d\,m}^{BB}}^H+{\sigma_q^2}_{dm}\bs{I}_{L_A}\right)$. Furthermore, as the differential entropy of a vector $\bs{x}\sim\mathcal{CN}(\bs{\omega},\bs{\Theta})$ is given by $\mathcal{H}(\bs{x})=\log\det(\pi e \bs{\Theta})$ \cite{Cover06}, the required average rate to transfer the quantized vector $\hat{\mathcal{Q}}_{dm}\left(\bs{W}_{d\,m}^{BB}\bs{\Upsilon}^{1/2}\bs{s}_d\right)$ on the corresponding \gls{DL} fronthaul link can be obtained as (in bps/Hz)
\begin{equation}
\begin{split}
   \hat{C}_{dm} &= \mathbb{E}\left\{I\left(\hat{\mathcal{Q}}_{dm}\left(\bs{W}_{d\,m}^{BB}\bs{\Upsilon}^{1/2}\bs{s}_d\right);\bs{W}_{d\,m}^{BB}\bs{\Upsilon}^{1/2}\bs{s}_d\right)\right\} \\
                &= \mathbb{E}\left\{\mathcal{H}\left(\hat{\mathcal{Q}}_{dm}\left(\bs{W}_{d\,m}^{BB}\bs{\Upsilon}^{1/2}\bs{s}_d\right)\right)\right\} \\
                &\quad-\mathbb{E}\left\{\mathcal{H}\left(\hat{\mathcal{Q}}_{dm}\left(\bs{W}_{d\,m}^{BB}\bs{\Upsilon}^{1/2}\bs{s}_d\right)\bigr|\bs{W}_{d\,m}^{BB}\bs{\Upsilon}^{1/2}\bs{s}_d\right) \right\} \\
                &= \mathbb{E}\left\{\log_2\det\left(\frac{1}{{\sigma_q^2}_{dm}} \bs{W}_{d\,m}^{BB} \bs{\Upsilon} {\bs{W}_{d\,m}^{BB}}^H + \bs{I}_{L_A}\right)\right\},
\end{split}
\end{equation}
where $I(\hat{\bs{x}};\bs{x})$ is used to denote the mutual information between vectors $\hat{\bs{x}}$ and $\bs{x}$, and $\mathcal{H}(\hat{\bs{x}}|\bs{x})$ is the differential entropy of $\hat{\bs{x}}$ conditioned on $\bs{x}$. Since the determinant is a log-concave function on the set of positive semidefinite matrices, it follows from Jensen's inequality that
\begin{equation}
\begin{split}
   \hat{C}_{dm} &\leq \log_2\det\left(\frac{1}{{\sigma_q^2}_{dm}} \mathbb{E}\left\{\bs{W}_{d\,m}^{BB} \bs{\Upsilon} {\bs{W}_{d\,m}^{BB}}^H\right\} + \bs{I}_{L_A}\right) \\
                &= \log_2\det\left(\frac{1}{{\sigma_q^2}_{dm}} \sum_{k=1}^K \upsilon_k \bs{R}_{mk}^{BB} + \bs{I}_{L_A}\right),
\end{split}
\end{equation}
where $\bs{R}_{mk}^{BB} = \mathbb{E}\left\{\bs{w}_{mk}^{BB} {\bs{w}_{mk}^{BB}}^H\right\}$.

Analogously, the \gls{UL} quantization process performed at the $m$th \gls{AP} is given by $\hat{\mathcal{Q}}_{um}\left({\bs{r}_u}_m\right)={\bs{r}_u}_m + {\bs{q}_u}_m$. Thus, using arguments similar to those used in the \gls{DL} case, the required average rate to transfer the quantized vector $\hat{\mathcal{Q}}_{um}\left({\bs{r}_u}_m\right)$ on the corresponding \gls{UL} fronthaul link can be upper bounded as (in bps/Hz)
\begin{equation}
\begin{split}
   &\hat{C}_{um} = \mathbb{E}\left\{I\left(\hat{\mathcal{Q}}_{um}\left({\bs{r}_u}_m\right);{\bs{r}_u}_m\right)\right\} \\
   &\ = \mathbb{E}\left\{\mathcal{H}\left(\hat{\mathcal{Q}}_{um}\left({\bs{r}_u}_m\right)\right)\right\}-\mathbb{E}\left\{\mathcal{H}\left(\hat{\mathcal{Q}}_{um}\left({\bs{r}_u}_m\right)\bigr|{\bs{r}_u}_m\right) \right\} \\
   &\ \leq \log_2\det\left(\frac{P_u}{{\sigma_q^2}_{um}} \sum_{k=1}^K \omega_k \bs{R}_{mk}^{RF} + \left(\frac{\sigma_u^2(N)}{{\sigma_q^2}_{um}} +1\right)\bs{I}_{L_A}\right).
\end{split}
\end{equation}

\section{Pilot assignment}
\label{sec:pilot_assignment}

To warrant an appropriate system performance, the \gls{RRM} unit must efficiently manage both the pilot assignment and the \gls{UL} and \gls{DL} power control. As the pilots are not power controlled, pilot assignment and power control can be conducted independently. Since the length of the pilot sequences is limited to $\tau_p$, there only exist $\tau_p$ orthogonal pilot sequences. In a network with $K\leq \tau_p$ \glspl{MS}, an optimal pilot assignment strategy simply allocates $K$ orthogonal pilots to the $K$ \glspl{MS}. The real pilot assignment problem arises when $K > \tau_p$. In this case, fully orthogonal pilot assignment is no longer possible and hence, other pilot assignment strategies must be devised.

On the one hand, designing an optimal pilot assignment strategy aiming at maximizing the minimum rate allocated to the active \glspl{MS} in the network is a very difficult combinatorial problem, computationally unmanageable in most network setups of practical interest \cite{Ngo17}. On the other hand, using straightforward strategies such as, for instance, the pure \gls{RPA} scheme \cite{Ahmadi16}, where each \gls{MS} is randomly assigned one pilot sequence out of the set of $\tau_p$ orthogonal pilot sequences, or the \gls{BRPA} scheme, where each \gls{MS} is allocated a pilot sequence that is sequentially and cyclically selected from the ordered set of available orthogonal pilots, provides poor performance results. In order to avoid the computational complexity of the optimal strategies while improving the performance of the baseline \gls{RPA} or \gls{BRPA} approaches, a suboptimal solution is proposed in this paper that is based on the idea of \emph{clustering by dissimilarity}. This suboptimal approach, that will be termed as the \gls{DCPA} strategy, is motivated by the following key observation:

\textbf{Key observation}: \emph{In those scenarios where $K > \tau_p$, cell-free communication is severely impaired whenever \glspl{MS} showing very similar large-scale propagation patterns to the set of \glspl{AP} (that is, \glspl{MS} typically located nearby) are allocated the same pilot sequence. In this case, the inter-\gls{MS} interference leads to very poor channel estimates at all \glspl{AP} and, eventually, to low \glspl{SINR}.}

The clustering algorithm proposed in this work basically ensures that pilot sequences are only reused by \glspl{MS} showing \emph{dissimilar} large-scale propagation patterns to the \glspl{AP} (that is, \glspl{MS} typically located sufficiently apart). Two key aspects regarding the clustering operation are thus, on the one hand, to decide which should be the large-scale propagation pattern that ought to be used to represent a given \gls{MS} and, on the other hand, to decide what metric should be used to measure \emph{similarity} among the large-scale propagation patterns characterizing different \glspl{MS}. To this end, and resting upon the premise that the \gls{CPU} has perfect knowledge of the large-scale gains, let $\bs{\xi}_k=\left[\xi_{1k}\,\ldots\,\xi_{Mk}\right]^T$ denote the $M\times 1$ vector containing the average Frobenius norms of the equivalent channels linking the $k$th \gls{MS} to all $M$ \glspl{AP} in the cell-free network. Vector $\bs{\xi}_k$ can be considered as an effective \emph{fingerprint} characterizing the location of \gls{MS} $k$. Now, although no single definition of a similarity measure exists, the so-called \emph{cosine similarity} measure is one of the most commonly used similarity metrics when dealing with real-valued vectors. Hence, as the \emph{fingerprint} vectors characterizing the different \glspl{MS} are non-negative real-valued, the cosine similarity measure between two \emph{fingerprint} vectors $\bs{\xi}_k$ and $\bs{\xi}_{k'}$, defined as
\begin{equation}
   f_D\left(\bs{\xi}_k, \bs{\xi}_{k'}\right)=\frac{\bs{\xi}_k^T\bs{\xi}_{k'}}{\|\bs{\xi}_k\|_2 \|\bs{\xi}_{k'} \|_2},
\end{equation}
will be used as a proper similarity metric in our work. The resulting similarity values range from 0, meaning orthogonality (perfect dissimilarity), to 1, meaning exact match (perfect similarity).

The proposed \gls{DCPA} algorithm proceeds as follows. In a first step, it calculates the fingerprint of an imaginary \gls{MS} centroid, defined as
\begin{equation}
   \bs{\xi}_C=\frac{1}{K}\sum_{k=1}^K \bs{\xi}_k.
\end{equation}
Then, it moves onward to the calculation of the cosine similarity measures among the fingerprint vectors characterizing the $K$ \glspl{MS} in the network and the fingerprint of the centroid, that is, the algorithm proceeds to the calculation of $f_D\left(\bs{\xi}_k, \bs{\xi}_C\right)$, for all $k\in\{1,\ldots,K\}$. The \glspl{MS} are then sorted in descending order of similarity with the centroid, that is, the algorithm obtains the ordered set of subindices $\mathcal{O}=\left\{o_1, o_2, \ldots, o_K\right\}$, such that $f_D\left(\bs{\xi}_{o_1}, \bs{\xi}_C\right) \leq f_D\left(\bs{\xi}_{o_2}, \bs{\xi}_C\right)\leq \cdots \leq f_D\left(\bs{\xi}_{o_K}, \bs{\xi}_C\right)$. Once the \glspl{MS} have been sorted, the algorithm constructs $\tau_p$ clusters of \glspl{MS}, namely $\mathcal{K}_1, \ldots, \mathcal{K}_{\tau_p}$, with
\begin{equation}
\begin{split}
   \mathcal{K}_t=&\mathcal{O}\left(t:\tau_p:K\right) \\
                =&\left\{o_t, o_{t+\tau_p}, o_{t+2 \tau_p},\ldots\right\},\ \forall t\in\{1,\ldots,\tau_p\},
\end{split}
\end{equation}
and all \glspl{MS} in cluster $\mathcal{K}_t$, which are located far from each other, are allocated the same pilot code $\bs{\varphi}_t$. Note that the application of this algorithm ensures that, as far as it is possible, two \glspl{MS} having similar large-scale propagation fingerprints are allocated different pilot codes and, thus, they do not interfere to each other during the \gls{UL} channel estimation process. In other words, it aims at minimizing the residual interuser interference terms in both \eqref{eq:ydk} and \eqref{eq:yuk}.

\section{Max-min power allocation and optimal quantization}
\label{sec:power allocation_quantization}

\subsection{Downlink power control and quantization}

In line with previous research works on cell-free architectures \cite{Ngo15,Ngo17,Nayebi17,Bashar18b}, our aim in this subsection is to find the power control coefficients $\upsilon_k$, for all $k\in\{1,\ldots,K\}$, and the quantization noise variances ${\sigma_q^2}_{dm}$, for all $m\in\{1,\ldots,M\}$, that maximize the minimum of the achievable \gls{DL} rates of all \glspl{MS} while satisfying the average transmit power and \gls{DL} fronthaul capacity constraints at each \gls{AP}. Mathematically, this optimization problem can be formulated as
\begin{equation}
\begin{split}
   &\max_{\substack{\bs{\Upsilon} \succeq 0 \\ {\bs{\sigma}_q}_d \succeq 0}}\ \min_{k\in\{1,\ldots,K\}} \frac{\upsilon_k}{\sum_{k'=1}^K \upsilon_{k'} \varpi_{kk'} + \sigma_{\eta_{dk}}^2} \\
   &\textrm{s.t. } \sum_{k=1}^K \upsilon_k \theta_{mk}^{BB/RF} \leq \overline{P}_m - {\sigma_q^2}_{dm} L_A N,\,\forall\,m, \\
   &\phantom{\textrm{s.t. }} \log_2\det\left(\sum_{k=1}^K \frac{\upsilon_k}{{\sigma_q^2}_{dm}} \bs{R}_{mk}^{BB} + \bs{I}_{L_A}\right) \leq {C_F}_d,\,\forall\,m,
\end{split}
\label{eq:opt_problem_DL}
\end{equation}
where we have used the definition ${\bs{\sigma}_q}_d=[{\sigma_q}_{d1}\,\ldots\,{\sigma_q}_{dM}]^T$.

Optimization problem \eqref{eq:opt_problem_DL} is characterized by continuous objective and constraint functions of interdependent block variables, namely, $\bs{\Upsilon}$ and ${\bs{\sigma}_q}_d$.  A widely used approach for solving optimization problems of this class is the so-called \gls{BCD} method. This is an iterative optimization approach that, at each iteration and in a cyclic order, optimizes one of the blocks while the remaining variables are held fixed \cite{Tseng01,Beck13}. Convergence of the \gls{BCD} method is ensured whenever each of the subproblems to be optimized in each iteration can be exactly solved to its unique optimal solution.

The first important fact to note is that, given a power allocation matrix $\bs{\Upsilon}^{(i-1)}$ obtained at the $(i-1)$th iteration, and as the achievable user rates monotonically increase with the capacity of the fronthaul links between the \glspl{AP} and the \gls{CPU}, the optimal solution for the acceptable fronthaul quantization noise in the $i$th iteration is achieved when the fronthaul capacity constraints are satisfied with equality, that is, when
\begin{equation}
   \det\left(\sum_{k=1}^K \frac{\upsilon_k^{(i-1)}}{{\sigma_q^2}_{dm}^{(i)}} \bs{R}_{mk}^{BB} + \bs{I}_{L_A}\right) = 2^{{C_F}_d},\,\forall\,m.
   \label{eq:trasc_funct}
\end{equation}
Note that ${\sigma_q^2}_{dm}^{(i)}$ cannot be expressed in a closed-form algebraic expression as it only admits a solution in the form of a transcendental function
\begin{equation}
   {\sigma_q^2}_{dm}^{(i)}=F_d\left(\bs{\Upsilon}^{(i-1)}, \left\{\bs{R}_{mk}^{BB}\right\}_{k=1}^K, {C_F}_d\right)
\end{equation}
that can be numerically solved by applying mathematical software tools to \eqref{eq:trasc_funct}.

Once the optimal block of variables ${\bs{\sigma}_q}_d^{(i)}$ have been obtained, the optimization problem in \eqref{eq:opt_problem_DL} can be rewritten in terms of the power allocation matrix $\bs{\Upsilon}^{(i)}$ as
\begin{equation}
\begin{split}
   &\max_{\bs{\Upsilon}^{(i)} \succeq 0}\ \min_{k\in\{1,\ldots,K\}} \frac{\upsilon_k^{(i)}}{\displaystyle{\small{\sum_{k'=1}^K} \upsilon_{k'}^{(i)} \gamma_{kk'} + \small{\sum_{m=1}^M} {\sigma_q^2}_{dm}^{(i)} \tr\left(\bs{R}_{mk}^{RF}\right) + \sigma_d^2}} \\
   &\textrm{s.t. } \sum_{k=1}^K \upsilon_k^{(i)} \theta_{mk}^{BB/RF} \leq \overline{P}_m - N L_A {\sigma_q^2}_{dm}^{(i)},\,\forall\,m.
\end{split}
\label{eq:opt_problem_DL_GOPA}
\end{equation}
Note that this is a convergent quasi-linear optimization problem that can be solved using conventional standard convex optimization methods \cite{Ngo17,Nayebi17}.

\subsection{Uplink power control and quantization}

In this subsection we aim at finding the power control coefficients $\omega_k$, for all $k\in\{1,\ldots,K\}$, and quantization noise variances ${\sigma_q^2}_{um}$, for all $m\in\{1,\ldots,M\}$, that maximize the minimum of the achievable ulink rates of all \glspl{MS} while satisfying the power control coefficient constraints at each \gls{MS} and the \gls{UL} fronthaul capacity constraints at each \gls{AP}. This optimization problem can be formulated as
\begin{equation}
\begin{split}
   &\max_{\substack{\bs{\omega} \succeq 0 \\ {\bs{\sigma}_q}_u \succeq 0}}\ \min_{k\in\{1,\ldots,K\}} \frac{P_u \omega_k}{P_u \sum_{k'=1}^K \omega_{k'} \delta_{kk'} + \sigma_{\eta_{uk}}^2} \\
   &\textrm{s.t. } 0 \leq \omega_k \leq 1,\,\forall\,k, \\
   &\phantom{\textrm{s.t. }} \det\left(\frac{P_u}{{\sigma_q^2}_{um}} \sum_{k=1}^K \omega_k \bs{R}_{mk}^{RF} + \vartheta_m\bs{I}_{L_A}\right) \leq 2^{{C_F}_u},\,\forall\,m,
\end{split}
\label{eq:opt_problem_UL}
\end{equation}
where ${\bs{\sigma}_q}_u=[{\sigma_q}_{u1}\,\ldots\,{\sigma_q}_{uM}]^T$, and we have used the definition $\vartheta_m= 1+\sigma_u^2(N)/{\sigma_q^2}_{um}$. As for the \gls{DL} case, problem \eqref{eq:opt_problem_UL} admits the use of the \gls{BCD} method where, in each iteration, the nonconvex transcendental function ${\sigma_q^2}_{um}=F_u\left(\bs{\Omega}, \left\{\bs{R}_{mk}^{RF}\right\}_{k=1}^K, P_u, {C_F}_u\right)$ is approximated by a constant calculated using the power allocation vector obtained in the previous iteration of the algorithm. That is, in the $i$th iteration of the \gls{UL} optimal power allocation approach, the algorithm solves the optimization problem
\begin{equation}
\begin{split}
   &\max_{\bs{\Omega}^{(i)} \succeq 0}\ \min_{k\in\{1,\ldots,K\}} \frac{P_u \omega_k^{(i)}}{P_u \sum_{k'=1}^K \omega_{k'}^{(i)} \delta_{kk'} + {\sigma_{\eta_{uk}}^{\,2(i)}}}, \\
   &\textrm{s.t. } 0 \leq \omega_k \leq 1,\,\forall\,k,
\end{split}
\label{eq:opt_problem_UL_SOPA}
\end{equation}
where ${\sigma_q^2}_{um}^{(i)}=F_u\left(\bs{\Omega}^{(i-1)}, \left\{\bs{R}_{mk}^{RF}\right\}_{k=1}^K, P_u, {C_F}_u\right)$. Note that, again, this is a convergent quasi-linear optimization problem that can be solved using conventional convex optimization methods \cite{Ngo17,Nayebi17}.

\section{Numerical results}
\label{sec:numerical_results}

In this section, simulation results are obtained in order to quantitatively study the performance of the proposed cell-free \gls{mmWave} massive \gls{MIMO} network with constrained-capacity fronthaul links. In particular, we demonstrate the impact of using different pilot allocation strategies, the effects of modifying the capacity of the fronthaul links and the \gls{RF} infrastructure at the \glspl{AP}, and the repercussion of changing the density of \glspl{AP} per area unit. For simplicity of exposition, and without loss of essential generality, a cell-free scenario is considered where the $M$ \glspl{AP} and $K$ \glspl{MS} are uniformly distributed at random within a square coverage area of size $D \times D$~$m^2$. As described in subsection \ref{subsec:Channel_model}, a modified version of the discrete-time narrowband clustered channel model proposed by Akdeniz \emph{et al.} in \cite{Akdeniz14} is used in the performance evaluation. The parameters necessary to implement this channel model can be found in \cite[Table I]{Akdeniz14}. Furthermore, similar to what was done by Ngo \emph{et al.} in \cite{Ngo17}, a shadow fading spatial correlation model with two components is also considered (see \cite[eqs. (54) and (55)]{Ngo17}) where the decorrelation distance is set to $d_{decorr}=50$~m and the parameter $\delta$ is set to 0.5. Default parameters used to set-up the simulation scenarios under evaluation in the following subsections are summarized in Table \ref{tab:parameters}.

\begin{table}[!t]
\renewcommand{\arraystretch}{1.2}
\caption{\small Summary of default simulation parameters}
\label{tab:parameters}
\centering
\begin{tabular}{l|c}
\hline
\bfseries Parameters & \bfseries Value\\
\hline
\footnotesize {Carrier frequency: $f_0$} & \footnotesize {28 GHz}\\
\footnotesize {Bandwidth: $B$} & \footnotesize {20 MHz}\\
\footnotesize {Side of the square coverage area: $D$} & \footnotesize {200 m}\\
\footnotesize {AP antenna height: $h_{AP}$} & \footnotesize {15 m}\\
\footnotesize {MS antenna height: $h_{MS}$} & \footnotesize {1.65 m}\\
\footnotesize {Noise figure at the MS: ${NF}_{MS}$} & \footnotesize {9 dB} \\
\footnotesize {Noise figure of the LNA at the AP: ${NF}_{LNA}$} & \footnotesize {1.6 dB} \\
\footnotesize {Gain of the LNA at the AP: $G_{LNA}$} & \footnotesize {22 dB} \\
\footnotesize {Attenuation of the phase splitters at the AP: $L_{PS}$} & \footnotesize {3 dB}\\
\footnotesize {Attenuation of the power combiner at the AP: ${L_{PC}}_{in}$} & \footnotesize {3 dB}\\
\footnotesize {Noise figure of the RF chain at the AP: $NF_{RF}$} & \footnotesize {7 dB}\\
\footnotesize {Available average power at the AP: $\overline{P}_m$} & \footnotesize {200 mW}\\
\footnotesize {Available average power at the MS: $P_u=P_p$} & \footnotesize {100 mW}\\
\footnotesize {Coherence interval length: $\tau_c$} & \footnotesize {200 samples}\\
\footnotesize {Training phase length: $\tau_p$} & \footnotesize {15 samples}\\
\hline
\end{tabular}
\end{table}

\begin{figure}[t!]

    \centering
    \begin{subfigure}[t]{0.24\textwidth}
        \centering
        \includegraphics[height=7cm]{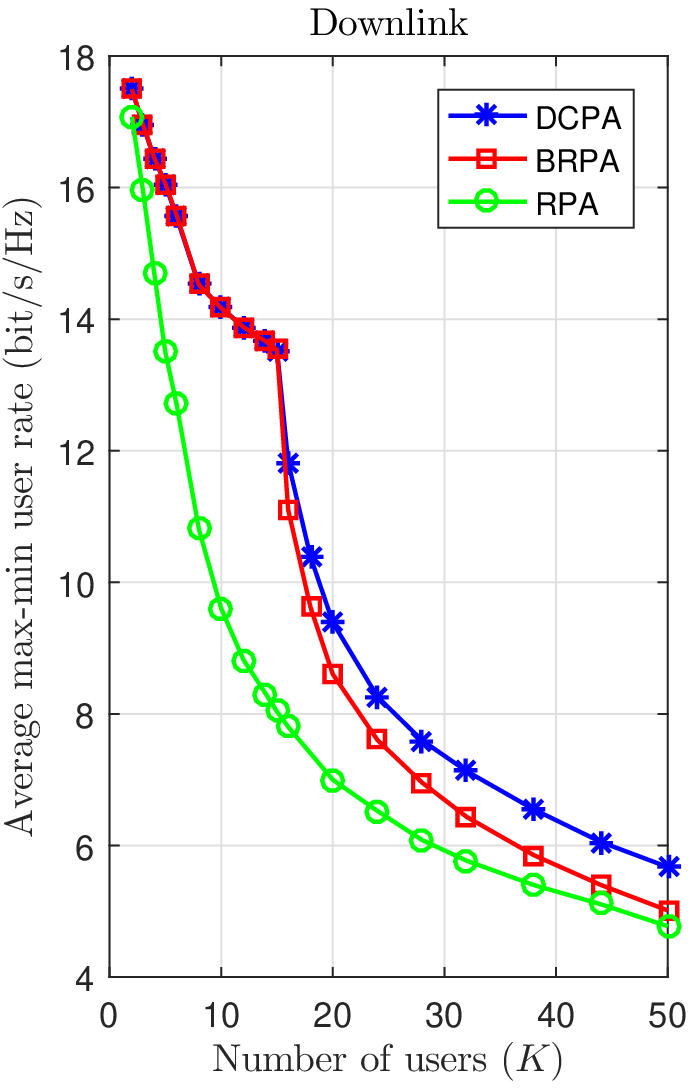}
        \label{fig:DL_var_Pilot_Alloc_min_rate}
    \end{subfigure}
    \begin{subfigure}[t]{0.24\textwidth}
        \centering
        \includegraphics[height=7cm]{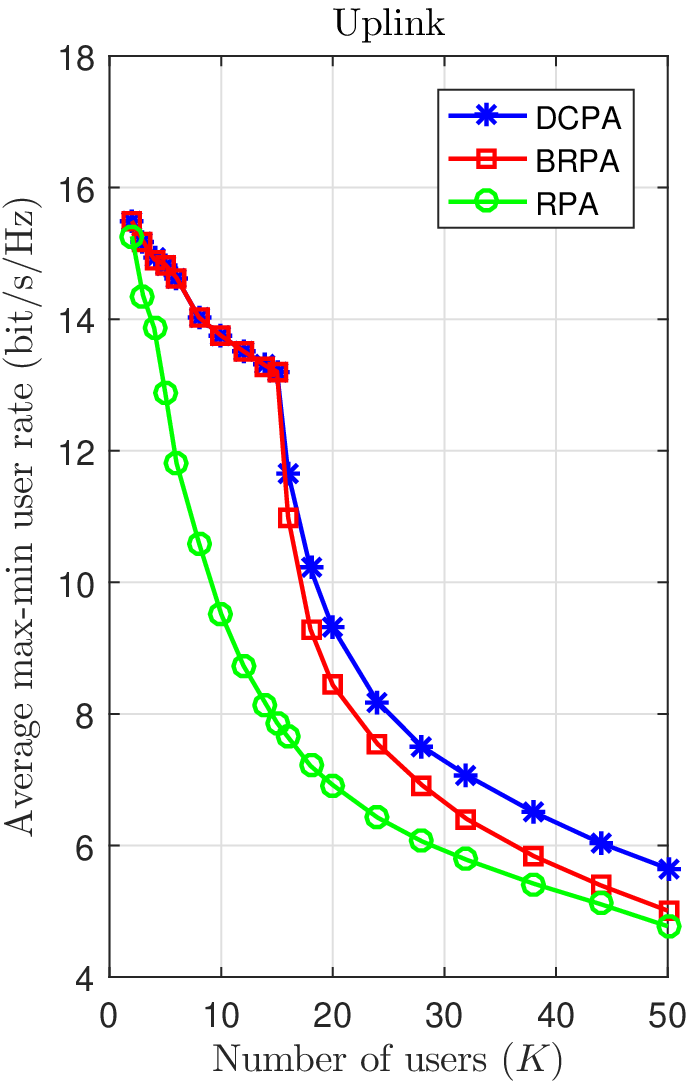}
        \label{fig:UL_var_Pilot_Alloc_min_rate}
    \end{subfigure}

    \caption{\small Average max-min rate per user \emph{versus} the number of active \glspl{MS} for different pilot allocation strategies ($N=64$ antennas, $L=8$ \gls{RF} chains, ${C_F}_d={C_F}_u=64$ bit/s/Hz).}
    \label{fig:var_Pilot_Alloc_min_rate}
\end{figure}

\subsection{Impact of the pilot allocation process}

Our aim in this subsection is to benchmark the performance of the proposed large-scale \gls{CSI}-aware \gls{DCPA} strategy against both the pure \gls{RPA} and the \gls{BRPA} schemes. Accordingly, the average max-min rate per user \emph{versus} the number of active \glspl{MS} is presented in Fig. \ref{fig:var_Pilot_Alloc_min_rate} for each of these pilot allocation strategies and for both the \gls{DL} and the \gls{UL}. All results have been obtained assuming the default system parameters described in Table \ref{tab:parameters}, the use of $L=8$ \gls{RF} chains fully connected to uniform linear antenna arrays with $N=64$ antenna elements, and fronthaul links with a capacity of ${C_F}_d={C_F}_u=64$ bit/s/Hz. The first important result to note from Fig. \ref{fig:var_Pilot_Alloc_min_rate} is that the pure \gls{RPA} scheme is clearly outperformed by both the \gls{BRPA} and the \gls{DCPA} strategies irrespective of the of active \glspl{MS} in the network. In fact, the \gls{RPA} scheme cannot guarantee neither the absence of pilot reuse, even for those cases in which $K\leq\tau_p$ (in this setup, $\tau_p=15$ time/frequency samples), nor the possibility of having pilots that are allocated to a high number of \glspl{MS} and/or to \glspl{MS} exhibiting very similar large-scale propagation patterns to the \glspl{AP}. Therefore, the higher the number of active \glspl{MS}, the higher the probability of having one or more users suffering from high levels of pilot contamination, with the consequent reduction of the achievable max-min user rate. If we turn our attention to results provided by the \gls{BRPA} and \gls{DCPA} strategies, two disjoint operation regions can be distinguished. In the first one, comprising the scenarios in which $K\leq\tau_p$, both approaches allocate orthogonal pilots to the users (absence of pilot contamination) and thus naturally provide the same performance. In the second one, however, comprising the scenarios in which $K>\tau_p$, pilots have to be reused and, as a consequence, pilot contamination appears (note the rather abrupt performance drop when going from $K\leq\tau_p$ to $K>\tau_p$). In these scenarios, based on a smart exploitation of the available large-scale \gls{CSI}, the proposed \gls{DCPA} approach reduces the amount of pilot contamination experienced by the worst users in the network and it clearly improves the achievable max-min user rates provided by the channel-unaware \gls{BRPA} scheme.

Another result that is worth emphasizing, since it will repeatedly appear in the following subsections, is that, although in scenarios with high-capacity fronthaul links the achievable max-min \gls{DL} user rate is higher than that provided in the \gls{UL}, as the number of active users in the network increases, the performance obtained in both the \gls{DL} and the \gls{UL} tend to become increasingly similar. This behavior can be easily deduced from the analysis of the \gls{SINR} expressions in \eqref{eq:SINRdk} and \eqref{eq:SINRuk}. As the number of active \glspl{MS} in the cell-free network increases, provided that it is greater than $\tau_p$, the term in the denominator corresponding to the residual interuser interference due to pilot contamination becomes increasingly dominant in comparison to the quantification and thermal noise terms, eventually reaching the point where they can be considered virtually negligible. Under these conditions, and since the pre-coding filters used on both links are identical, the \gls{DL} and the \gls{UL} experience similar \gls{SINR} values and, therefore, tend to provide the same achievable max-min rate per user, except for small differences that can be attributed to, on the one and, the dissimilar amount of quantified information that has to be conveyed through the corresponding fronthaul links and, on the other hand, disparities among the thermal noise powers experienced at both the \glspl{AP} and the \glspl{MS}.

\begin{figure}[t!]

    \centering
    \begin{subfigure}[t]{0.24\textwidth}
        \centering
        \includegraphics[height=7cm]{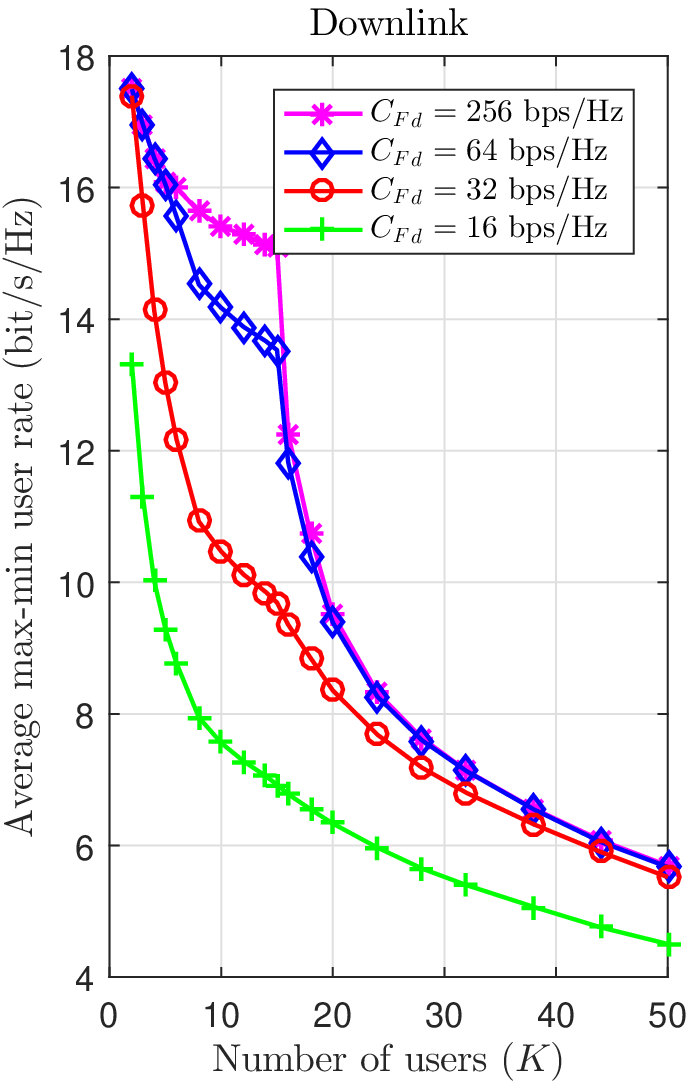}
        \label{fig:DL_varCFd_min_rate}
    \end{subfigure}
    \begin{subfigure}[t]{0.24\textwidth}
        \centering
        \includegraphics[height=7cm]{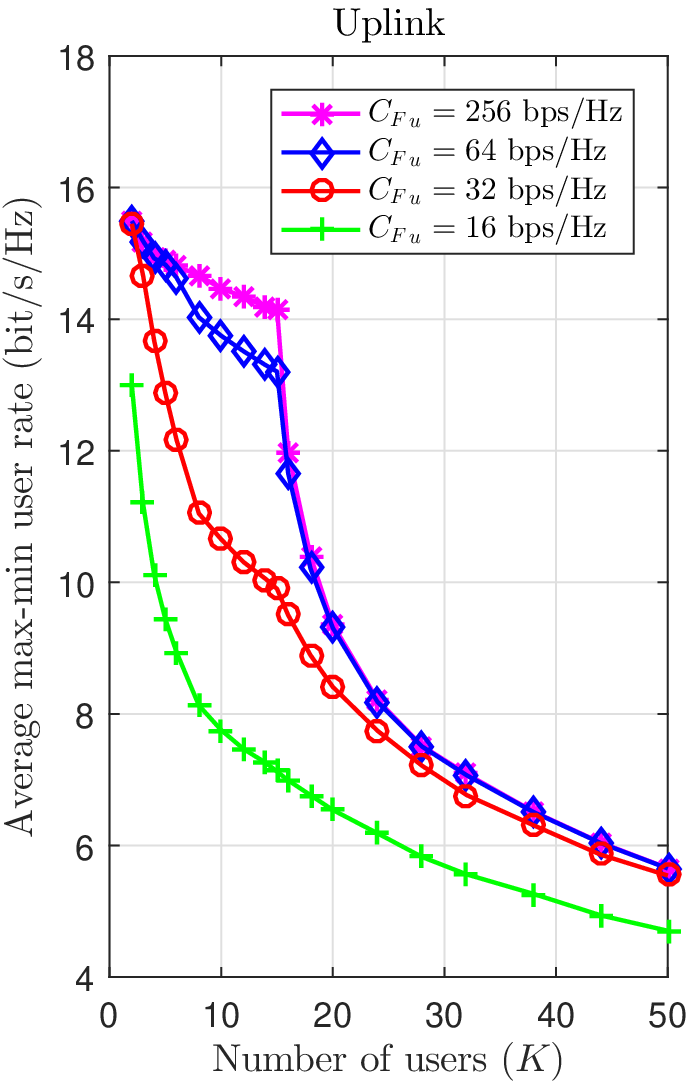}
        \label{fig:DL_varCFu_min_rate}
    \end{subfigure}

    \caption{\small Average max-min rate per user \emph{versus} the number of active \glspl{MS} for different values of the fronthaul capacities ($N=64$ antennas, $L=8$ \gls{RF} chains, \gls{DCPA}).}
    \label{fig:varCF_min_rate}
\end{figure}

\subsection{Modifying the capacity of the fronthaul links and the \gls{RF} infrastructure at the \glspl{AP}}

The max-min achievable rate per user is plotted in Fig. \ref{fig:varCF_min_rate} against the number of active \glspl{MS} in the network, assuming the use of fronthaul links with different constraining capacities equal to 16, 32, 64 and 256 bit/s/Hz (for the network setups under consideration, using fronthaul links with a capacity of 256 bit/s/Hz is virtually equivalent to using infinite-capacity fronthauls). As expected, results show that increasing the fronthaul capacity is always beneficial if the main aim is to increase the achievable max-min user rate. Nevertheless, it is worth stressing that, keeping all the other parameters constant, the marginal increment of performance produced by each new increment of the fronthaul capacity suffers from the law of diminishing returns, especially for network setups with a high number of active \glspl{MS}. That is, although the performance increase produced by doubling the fronthaul capacity from 16 bit/s/Hz to 32 bit/s/Hz, or even from 32 bit/s/Hz to 64 bit/s/Hz, can be justifiable, increasing the fronthaul capacity beyond 64 bit/s/Hz does not seem to be reasonable from the point of view of increasing the achievable performance of the system under the considered network setups. As observed in the previous subsection, in cell-free \gls{mmWave} massive \gls{MIMO} networks using high-capacity fronthaul links, the achievable max-min \gls{DL} user rate is always slightly higher than that achieved in the \gls{UL} irrespective of the number of active \glspl{MS}. In scenarios with low-capacity fronthaul links and a large number of active \glspl{MS}, however, the quantization noise experienced in the \gls{DL} is higher than its \gls{UL} counterpart and thus, the achievable per-user rate in the \gls{UL} is slightly higher that than supplied in the \gls{DL}.

\begin{figure}[t!]

    \centering
    \begin{subfigure}[t]{0.24\textwidth}
        \centering
        \includegraphics[height=7cm]{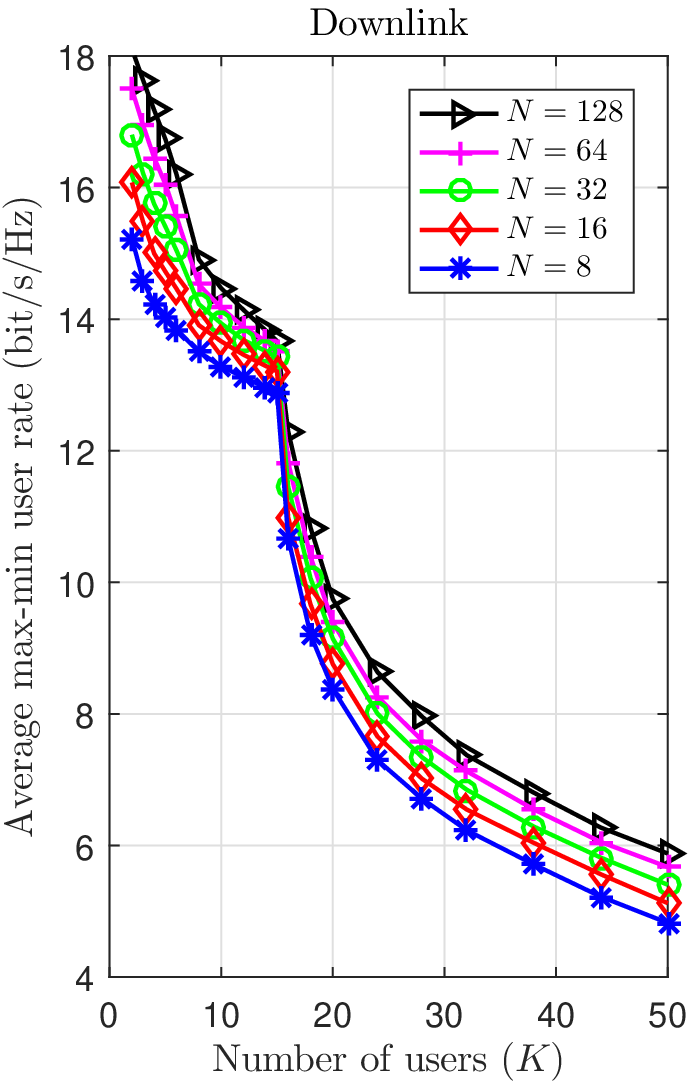}
        \label{fig:DL_varN_min_rate}
    \end{subfigure}
    \begin{subfigure}[t]{0.24\textwidth}
        \centering
        \includegraphics[height=7cm]{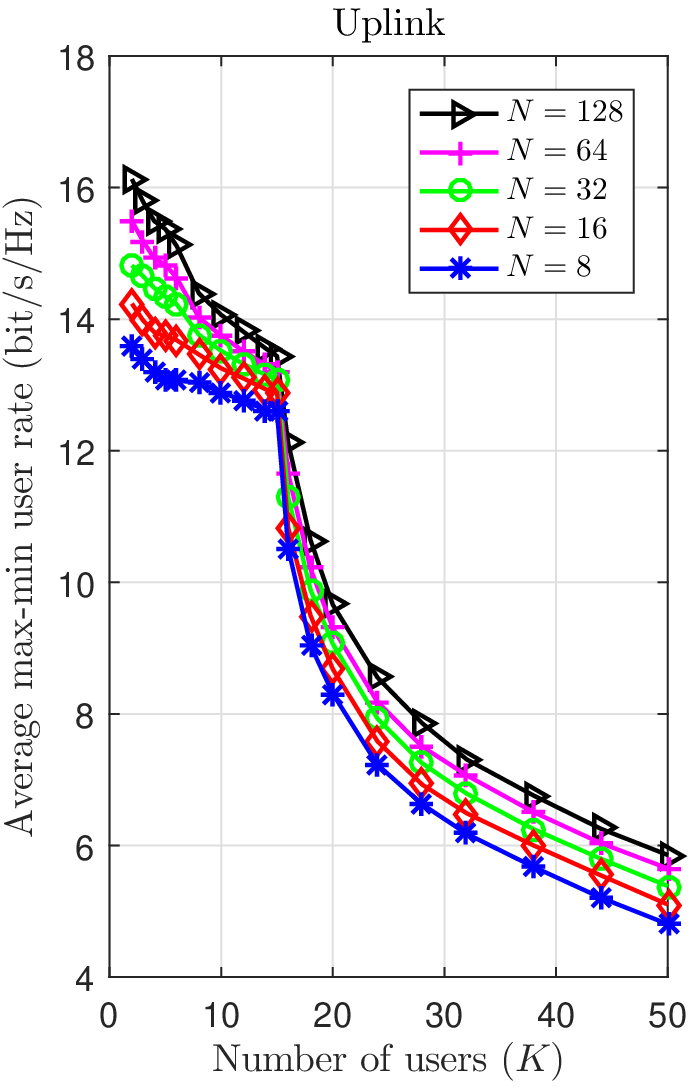}
        \label{fig:DL_varN_min_rate}
    \end{subfigure}

    \caption{\small Average max-min rate per user \emph{versus} the number of active \glspl{MS} for different values of the number of antennas at the \glspl{AP} ($L=8$ \gls{RF} chains, ${C_F}_d={C_F}_u=64$ bit/s/Hz, \gls{DCPA}).}
    \label{fig:varN_min_rate}
\end{figure}

\begin{figure}[t!]

    \centering
    \begin{subfigure}[t]{0.24\textwidth}
        \centering
        \includegraphics[height=7cm]{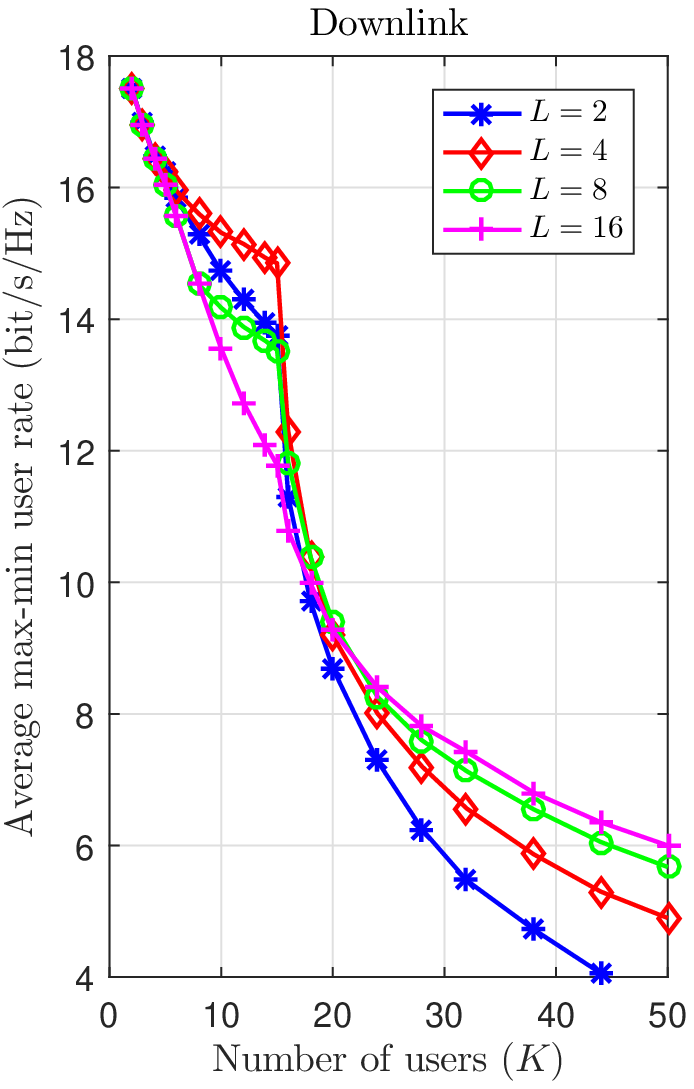}
        \label{fig:DL_varL_min_rate}
    \end{subfigure}
    \begin{subfigure}[t]{0.24\textwidth}
        \centering
        \includegraphics[height=7cm]{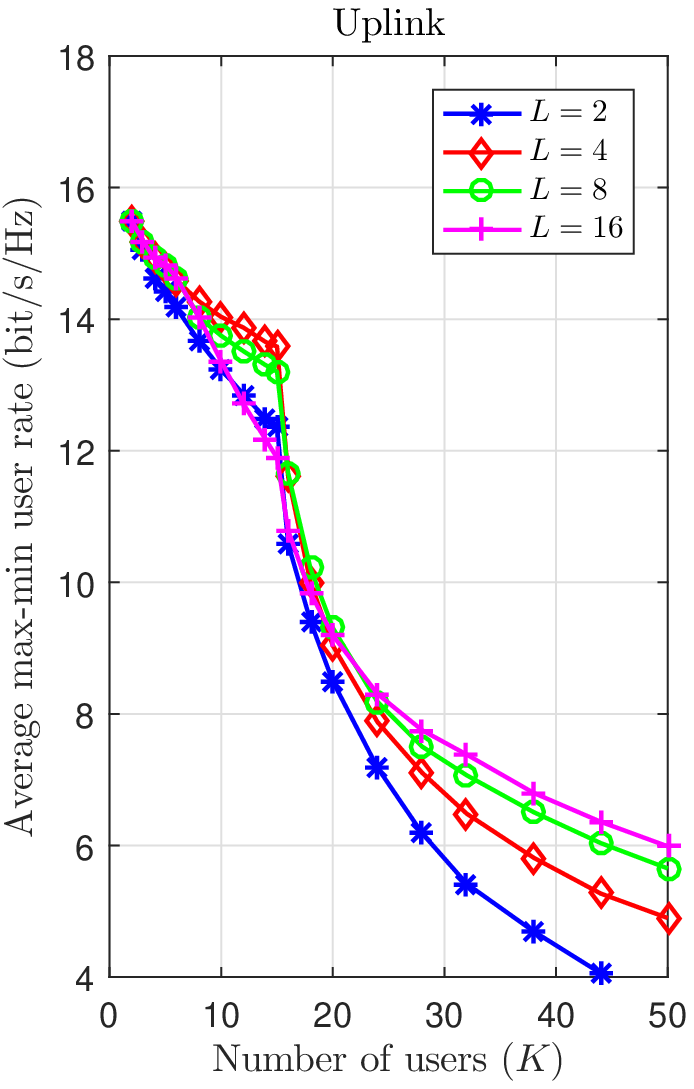}
        \label{fig:DL_varL_min_rate}
    \end{subfigure}

    \caption{\small Average max-min rate per user \emph{versus} the number of active \glspl{MS} for different values of the number of \gls{RF} chains at the \glspl{AP} ($N=64$ antennas, ${C_F}_d={C_F}_u=64$ bit/s/Hz, \gls{DCPA}).}
    \label{fig:varL_min_rate}
\end{figure}

To understand how the \gls{RF} infrastructure used at the \glspl{AP} influences the performance of the proposed cell-free \gls{mmWave} massive \gls{MIMO} system under constrained-capacity fronthaul links, Figs. \ref{fig:varN_min_rate} and \ref{fig:varL_min_rate} show the achievable max-min user rate against the number of active \glspl{MS} assuming the use of uniform linear antenna arrays with different number of elements and fully-connected analog \gls{RF} precoders with different number of \gls{RF} chains, respectively. In particular, results presented in Fig. \ref{fig:varN_min_rate} have been obtained assuming the use of an analog precoder with $L=8$ \gls{RF} chains fully-connected to a linear uniform antenna array with $N=8$, 16, 32, 64 or 128 antenna elements, whereas results presented in Fig. \ref{fig:varL_min_rate} have been obtained assuming the use of $L=2$, 4, 8 or 16 \gls{RF} chains fully-connected to a linear uniform antenna array with $N=64$ antenna elements. The first conclusion we may draw when looking at the results presented in Fig. \ref{fig:varN_min_rate} is that, irrespective of the number of active \glspl{MS} in the cell-free network, increasing the number of antenna elements at the \glspl{AP} in scenarios with high capacity fronthaul links (${C_F}_d={C_F}_u=64$ bit/s/Hz), although moderate and subject to the law of diminishing returns, always produces an increase in the achievable max-min user rate. As shown in Fig. \ref{fig:varL_min_rate}, in contrast, the impact produced by an increase in the number of \gls{RF} chains at the \glspl{AP} depends on the number of active \glspl{MS} in the network. In particular, when the number of active users is high, the interuser interference term due to pilot contamination (imperfect \gls{CSI}) dominates the factors in the denominator of the \gls{SINR} (i.e., makes the quantization and thermal noises negligible) and thus, increasing the number of \gls{RF} chains is always beneficial when trying to increase the achievable max-min user rate. When the number of active users in the network is low, however, the quantization noise, which is an increasing function of $L$, is not negligible anymore when compared to the interuser interference term (recall that this term is null when the number of active \glspl{MS} is less than or equal to $\tau_p$) and thus, increasing the number of \gls{RF} chains at the \glspl{AP} can be clearly disadvantageous.

\begin{figure}

    \centering
    \begin{subfigure}[t]{0.24\textwidth}
        \centering
        \includegraphics[height=7cm]{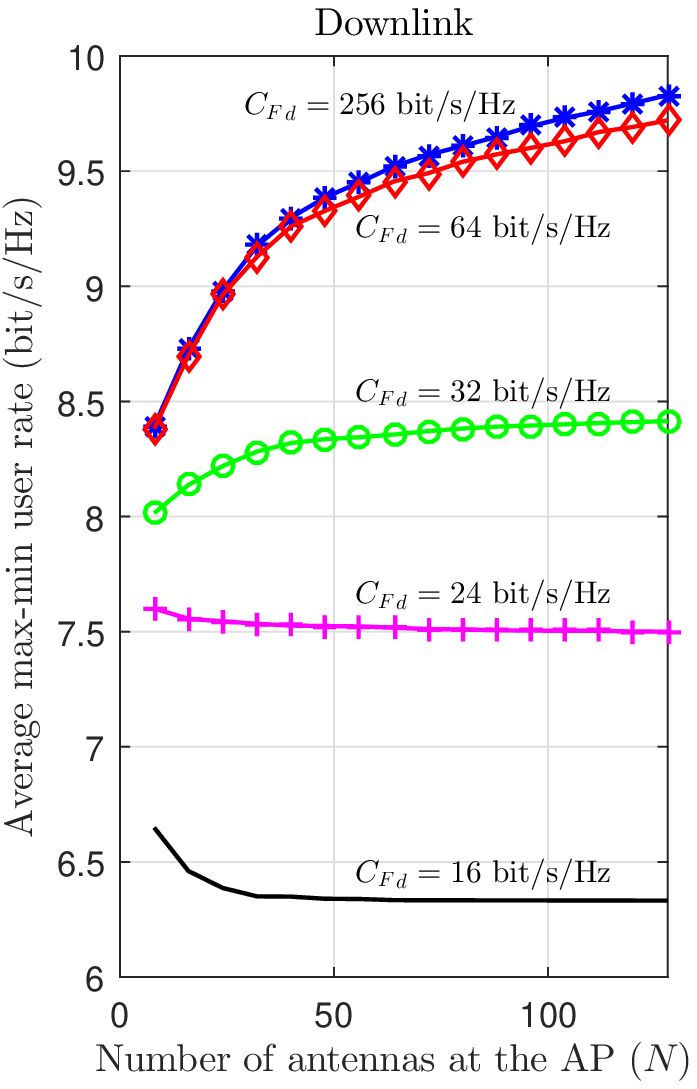}
        \label{fig:DL_FixK_varN_varCFd_min_rate}
    \end{subfigure}
    \begin{subfigure}[t]{0.24\textwidth}
        \centering
        \includegraphics[height=7cm]{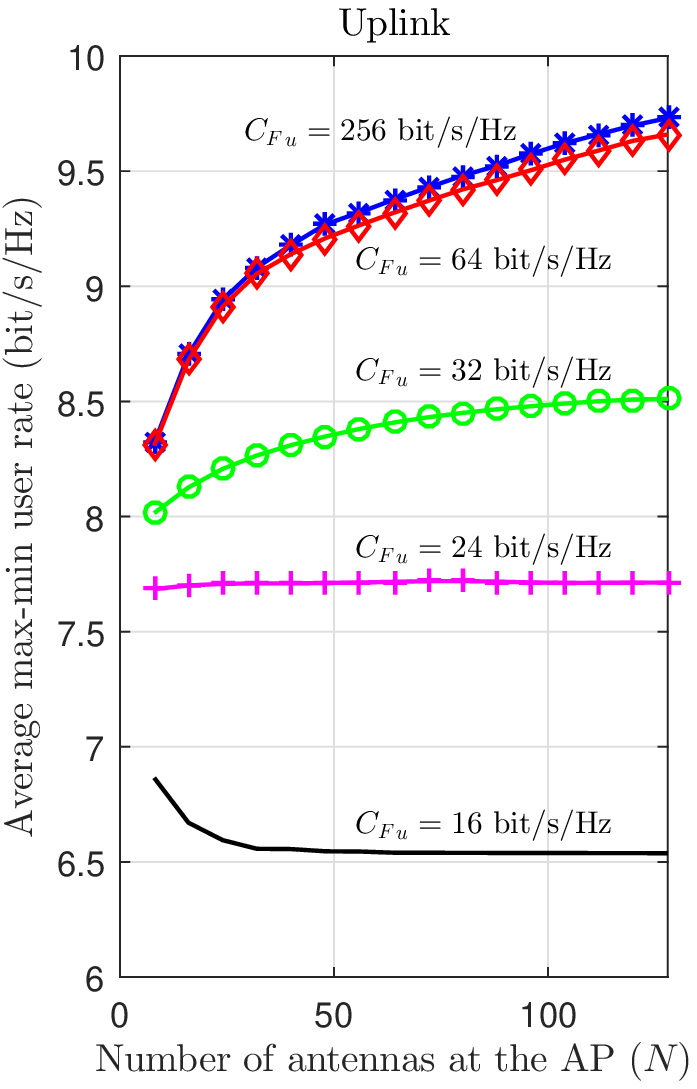}
        \label{fig:UL_FixK_varN_varCFu_min_rate}
    \end{subfigure}

    \caption{\small Average max-min rate per user \emph{versus} the number of antennas at the \glspl{AP} for different values of the fronthaul capacities ($K=20$ users, $L=8$ \gls{RF} chains, \gls{DCPA}).}
    \label{fig:FixK_varN_varCF_min_rate}
\end{figure}

\begin{figure}

    \centering
    \begin{subfigure}[t]{0.24\textwidth}
        \centering
        \includegraphics[height=7cm]{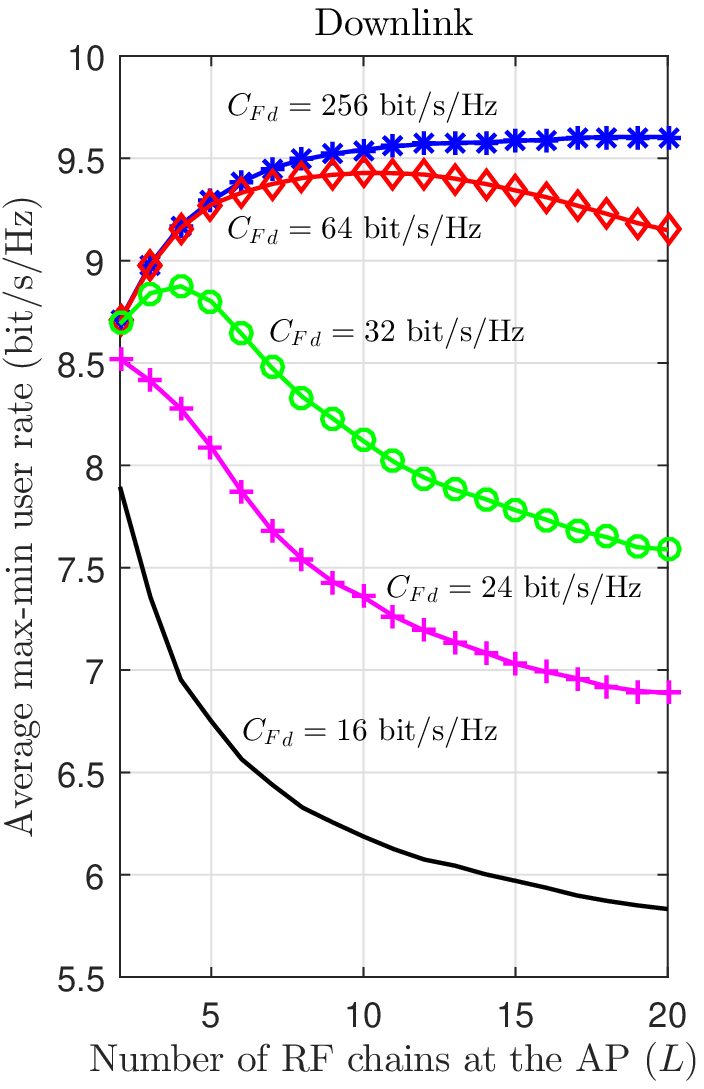}
        \label{fig:DL_FixK_varL_varCFd_min_rate}
    \end{subfigure}
    \begin{subfigure}[t]{0.24\textwidth}
        \centering
        \includegraphics[height=7cm]{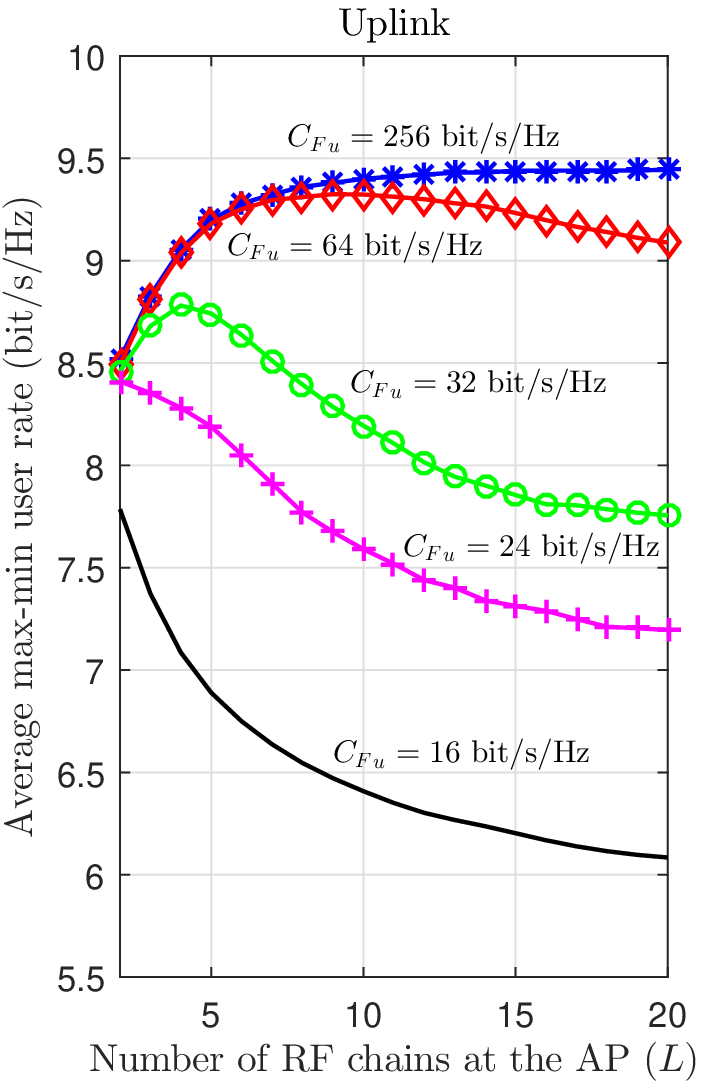}
        \label{fig:UL_FixK_varL_varCFu_min_rate}
    \end{subfigure}

    \caption{\small Average max-min rate per user \emph{versus} the number of \gls{RF} chains at the \glspl{AP} for different values of the fronthaul capacities ($K=20$ users, $N=64$ antennas, \gls{DCPA}).}
    \label{fig:FixK_varL_varCF_min_rate}
\end{figure}

\begin{figure*}

    \centering
    \begin{subfigure}[t]{0.48\textwidth}
        \centering
        \includegraphics[height=5.9cm]{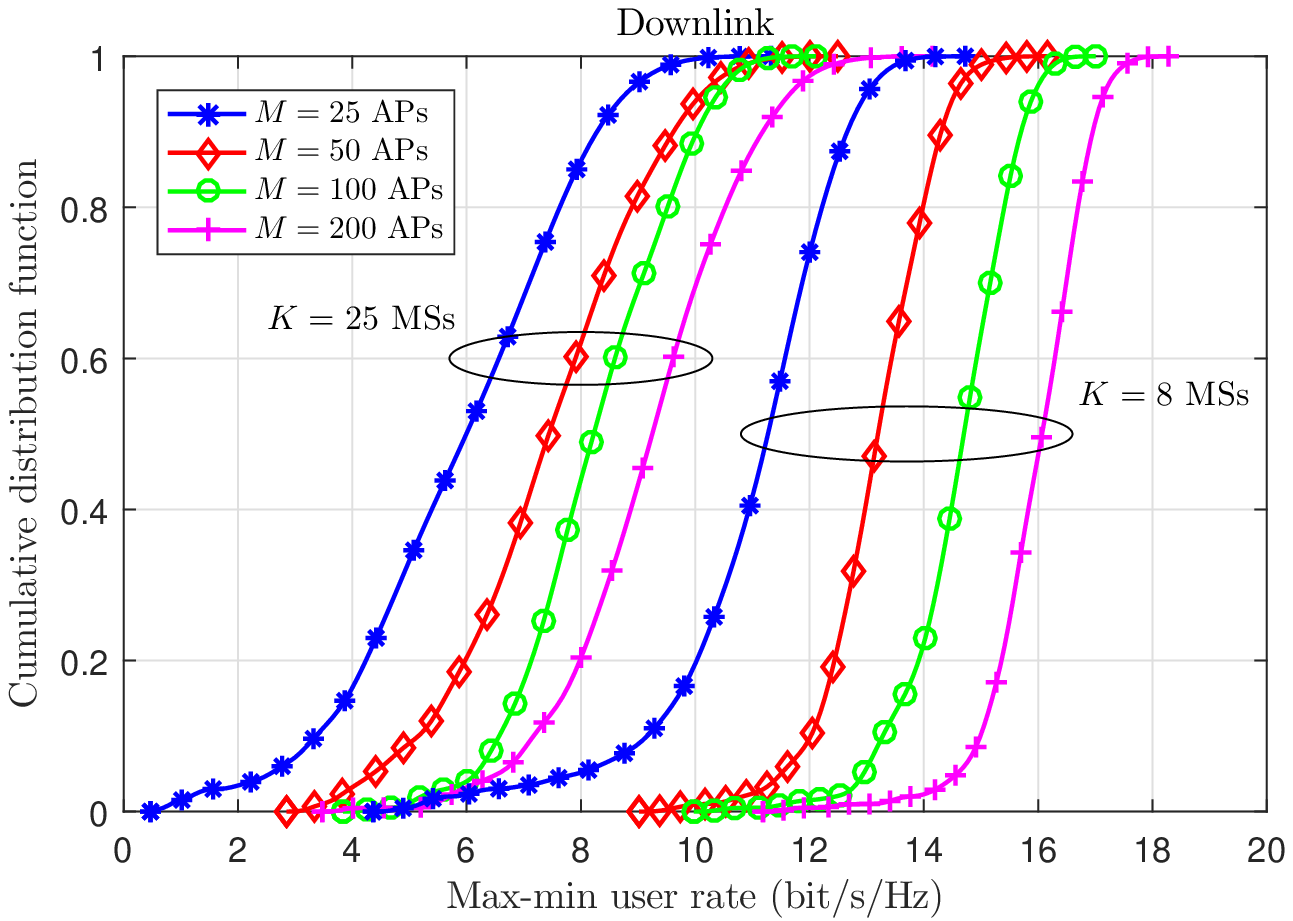}
        \label{fig:DL_CFR_VarM}
    \end{subfigure}
    \begin{subfigure}[t]{0.48\textwidth}
        \centering
        \includegraphics[height=5.9cm]{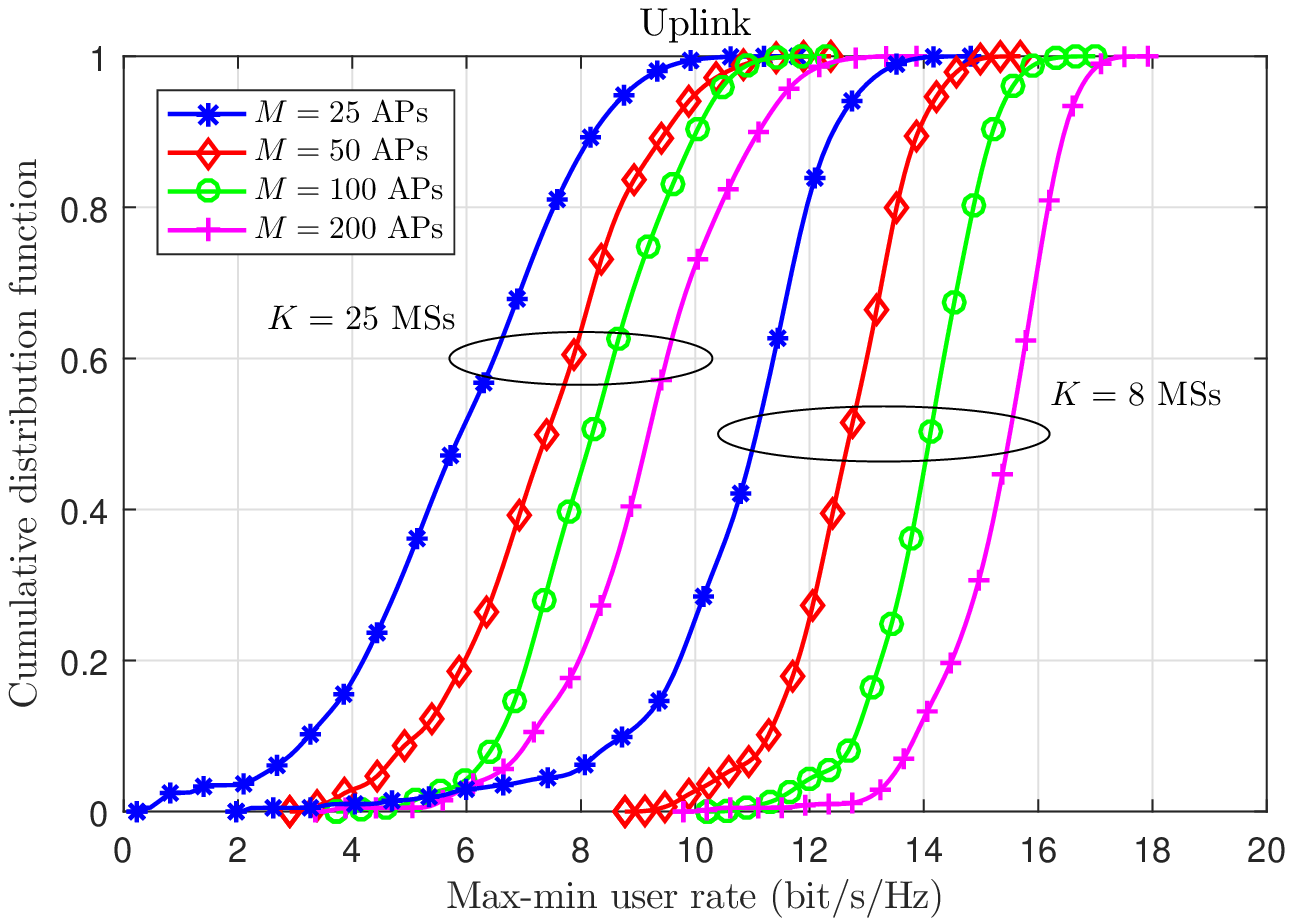}
        \label{fig:UL_CFR_VarM}
    \end{subfigure}
    \caption{\small CDF of the \gls{DL} and \gls{UL} achievable max-min rate per user for different values of the number of \glspl{AP} and active \glspl{MS} in the cell-free network ($N=64$ antennas, $L=8$ \gls{RF} chains, ${C_F}_d={C_F}_u=64$ bit/s/Hz, \gls{DCPA}).}
    \label{fig:CFR_VarM}
\end{figure*}

Results presented in Figs. \ref{fig:varCF_min_rate}, \ref{fig:varN_min_rate} and \ref{fig:varL_min_rate} were obtained assuming high-capacity fronthaul links with ${C_F}_d={C_F}_u=64$ bit/s/Hz. However, the amount of quantized data that has to be conveyed from (to) the \gls{CPU} to (from) the \glspl{AP} in the \gls{DL} (\gls{UL}) depends on the number of antennas and \gls{RF} chains at the \glspl{AP} (see Section \ref{sec:fronthaul_capacity}). Thus, in order to deepen in the study of the impact the \gls{RF} infrastructure may have on the achievable performance of the proposed cell-free \gls{mmWave} massive \gls{MIMO} system under constrained-capacity fronthaul links, the average max-min user rate is plotted in Figs. \ref{fig:FixK_varN_varCF_min_rate} and \ref{fig:FixK_varL_varCF_min_rate} against the number of antenna elements and \gls{RF} chains, respectively, for different values of the fronthaul capacities and assuming a fixed number of $K=20$ active \glspl{MS} in the network. In network setups using very high capacity fronthaul links (i.e., ${C_F}_d={C_F}_u=256$ bit/s/Hz), increasing the number of antenna elements $N$ and/or the number of \gls{RF} chains $L$ (up to $L=K$) is always beneficial as, in this case, the noise introduced by the quantization process is negligible and the system can take full advantage of the increased \gls{RF} resources. As the capacity of the fronthaul links decreases, however, the amount of noise introduced by the quantization process increases with both $N$ and $L$ and, therefore, a situation arises where the potential performance improvement provided by the increase of $N$ and/or $L$ is compromised by the performance reduction due to fronthaul capacity constraints. On the one hand, it can be observed in Fig. \ref{fig:FixK_varN_varCF_min_rate} that, for fixed numbers of users and \gls{RF} chains, there is a certain fronthaul capacity constraint value (near 24 bit/s/Hz in the setup used in this experiment) under which increasing the number of antenna elements at the array is counterproductive. On the other hand, results presented in Fig. \ref{fig:FixK_varL_varCF_min_rate} show that, for fixed numbers of users and antenna elements at the arrays, there is always an optimal number of \gls{RF} chains to be deployed (or activated) at the \glspl{AP} that is dependent on the capacity of the fronthaul links. In particular, for the network setups under consideration, the optimal number of \gls{RF} chains is equal to $L=10$, 4, and 1 when using fronthaul links with a capacity of 64 bit/s/Hz, 32 bit/s/Hz and less than 24 bit/s/Hz, respectively.

\subsection{Impact of the density of \glspl{AP}}

With the aim of evaluating the impact the density of \glspl{AP} per area unit may have on the performance of the proposed cell-free \gls{mmWave} massive \gls{MIMO} system, Fig. \ref{fig:CFR_VarM} represents the \gls{CDF} of the \gls{DL} and \gls{UL} achievable max-min user rate for different values of the number of \glspl{AP} in the network. It has been assumed in these experiments a fixed number of active \glspl{MS} equal to either $K=25$ or $K=8$ \glspl{MS}, the use of $L=8$ \gls{RF} chains fully-connected to a linear uniform antenna array with $N=64$ antenna elements, and the use of \gls{DL} and \gls{UL} fronthaul links with a capacity ${C_F}_d={C_F}_u=64$ bit/s/Hz. As expected, cell-free massive \gls{MIMO} scenarios with a high density of \glspl{AP} per area unit significantly outperform those with a low density of \glspl{AP} per area unit in both median and 95\%-likely achievable per-user rate performance. However, the achievable max-min user rate increase due to increasing the number of \glspl{AP} in the network is, again, subject to the law of diminishing returns. For instance, in scenarios with $K=25$ \glspl{MS}, the 95\%-likely achievable user rate is equal to 2.55, 4.33, 6.11 and 6.50 bit/s/Hz for cell-frre massive \gls{MIMO} networks with $M=25$, 50, 100 and 200 \glspl{AP}, respectively. That is, doubling the number of \glspl{AP} per area unit does not result in doubling the 95\%-likely achievable user rate. Similar conclusions can be drawn when looking at either the median or the average achievable user rates.

As was observed in results presented in previous subsections for high-capacity fronthaul setups, when the number of active users in the system is low, the achievable max-min rate values in the \gls{DL} are slightly higher than those achievable in the \gls{UL}. Instead, when the number of active users increases, the achievable max-min user rates are virtually identical in both the \gls{DL} and the \gls{UL}. Also, note that the dispersion of the achievable max-min user rates around the median tends to diminish as the density of \glspl{AP} increases. That is, cell-free massive \gls{MIMO} networks with a high density of \glspl{AP} per area unit tend to offer max-min achievable rates that suffer little variations irrespective of the location of the \glspl{AP} (i.e, irrespective of the scenario under evaluation).

\section{Conclusion}
\label{sec:Conclusion}

A novel analytical framework for the performance analysis of cell-free \gls{mmWave} massive \gls{MIMO} networks has been introduced in this paper. The proposed framework considers the use of low-complexity hybrid precoders/decoders where the \gls{RF} high-dimensionality phase shifter-based precoding/decoding stage is based on large-scale second-order channel statistics, while the low-dimensionality baseband multiuser \gls{MIMO} precoding/decoding stage can be easily implemented by standard ZF signal processing schemes using small-scale estimated \gls{CSI}. Furthermore, it also takes into account the impact of using capacity-constrained fronthaul links that assume the use of large-block lattice quantization codes able to approximate a Gaussian quantization noise distribution, which constitutes an upper bound to the performance attained under any practical quantization scheme. Max-min power allocation and fronthaul quantization optimization problems have been posed thanks to the development of mathematically tractable expressions for both the per-user achievable rates and the fronthaul capacity consumption. These optimization problems have been solved by combining the use of block coordinate descent methods with sequential linear optimization programs. Results have shown that the proposed \gls{DCPA} suboptimal pilot allocation strategy, which is based on the idea of clustering by dissimilarity, overcomes the computational burden of the optimal small-scale CSI-based pilot allocation scheme while clearly outperforming the pure random and balanced random schemes. It has also been shown that, although increasing the fronthaul capacity and/or the density of \glspl{AP} per area unit is always beneficial from the point of view of the achievable max-min user rate, the marginal increment of performance produced by each new increment of these parameters suffers from the law of diminishing returns, especially for network setups with a high number of active \glspl{MS}. Moreover, simulation results indicate that, as the capacity of the fronthaul links decreases, the potential performance improvement provided by the increase of the number of antenna elements $N$ and/or the number of \gls{RF} chains $L$ is compromised by the performance reduction due to the corresponding increase of the fronthaul quantization noise. In particular, for fixed numbers of users and \gls{RF} chains, there is a certain fronthaul capacity constraint value (near 24 bit/s/Hz in the setups under consideration) under which increasing the number of antenna elements at the array is counterproductive. Similarly, for fixed numbers of users and antenna elements at the arrays, there is always an optimal number of \gls{RF} chains to be deployed (or activated) at the \glspl{AP} that is dependent on the capacity of the fronthaul links. For future work, it would be interesting to develop low-complexity pilot- and power-allocation techniques specifically designed to maximize the energy efficiency of cell-free \gls{mmWave} massive \gls{MIMO} networks considering both the fronthaul capacity constraints and the fronthaul power consumption. It would also be interesting to explore the use of partially-connected \gls{RF} precoding/decoding architectures and the implementation of baseband \gls{MU-MIMO} precoding/decoding other than the \gls{ZF} scheme.

\appendices

\section{Proof of Theorem \ref{theo:theorem_DL}}
\label{app:Appendix_1}

Following an approach similar to that proposed by Nayebi \emph{et al.} in \cite{Nayebi17}, the signal received by the $k$th \gls{MS} in \eqref{eq:ydk} can be rewritten as ${y_d}_k={y_d}_{k\,0} + {y_d}_{k\,1} + {y_d}_{k\,2} + {n_d}_k$, where the useful, interuser interference, and quantization noise terms can be expressed as ${y_d}_{k\,0}=\sqrt{\upsilon_k} {s_d}_k$, ${y_d}_{k\,1}=\tilde{\bs{g}}_k^T \bs{W}_d^{BB} \bs{\Upsilon}^{1/2}\bs{s}_d$, and ${y_d}_{k\,2} = \bs{g}_k^T \bs{q}_d=\sum_{m=1}^M \bs{g}_{km}^T {\bs{q}_d}_m$, respectively. Now, considering that data symbols, quantization noise, thermal noise, and channel-related coefficients are mutually independent, the terms ${y_d}_{k\,0}$, ${y_d}_{k\,1}$, ${y_d}_{k\,2}$ and ${n_d}_k$ are mutually uncorrelated and thus, based on the worst-case uncorrelated additive noise \cite{Hassibi03}, the achievable \gls{DL} rate for user $k$ is lower bounded by ${R_d}_k=\log_2\left(1+{\SINR_d}_k\right)$, with
\begin{equation*}
   {\SINR_d}_k=\frac{\mathbb{E}\left\{\left|{y_d}_{k\,0}\right|^2\right\}}{\mathbb{E}\left\{\left|{y_d}_{k\,1}\right|^2\right\}+\mathbb{E}\left\{\left|{y_d}_{k\,2}\right|^2\right\}+\sigma_d^2},
\end{equation*}
where $\mathbb{E}\left\{\left|{y_d}_{k\,0}\right|^2\right\}=\upsilon_k$,
\begin{equation*}
\begin{split}
   &\mathbb{E}\left\{\left|{y_d}_{k\,1}\right|^2\right\}=\mathbb{E}\left\{\bs{s}_d^H\bs{\Upsilon}^{1/2} {\bs{W}_d^{BB}}^H \tilde{\bs{g}}_k^* \tilde{\bs{g}}_k^T \bs{W}_d^{BB} \bs{\Upsilon}^{1/2}\bs{s}_d\right\} \\
   &\qquad=\tr\left(\bs{\Upsilon} \mathbb{E}\left\{{\bs{W}_d^{BB}}^H \tilde{\bs{g}}_k^* \tilde{\bs{g}}_k^T \bs{W}_d^{BB}\right\}\right) \\
   &\qquad=\sum_{k'=1}^K \upsilon_{k'} \left[\diag\left(\mathbb{E}\left\{{\bs{W}_d^{BB}}^H \tilde{\bs{g}}_k^* \tilde{\bs{g}}_k^T \bs{W}_d^{BB}\right\}\right)\right]_{k'},
\end{split}
\end{equation*}
and
\begin{equation*}
\begin{split}
   \mathbb{E}\left\{\left|{y_d}_{k\,2}\right|^2\right\}&=\sum_{m=1}^M \mathbb{E}\left\{{\bs{q}_d}_m^H \bs{g}_{km}^* \bs{g}_{km}^T {\bs{q}_d}_m\right\} \\
                                                       &=\sum_{m=1}^M {\sigma_q^2}_{dm} \tr\left(\bs{R}_{mk}^{RF}\right).
\end{split}
\end{equation*}

\section{Proof of Theorem \ref{theo:theorem_UL}}
\label{app:Appendix_2}

The detected signal at the \gls{CPU} corresponding to the symbol transmitted by the $k$th \gls{MS} in \eqref{eq:yuk} can be rewritten as ${y_u}_k={y_u}_{k\,0} + {y_u}_{k\,1} + {y_u}_{k\,2} + {y_u}_{k\,3}$, where the useful, interuser interference, quantization noise and thermal noise terms can be expressed as ${y_u}_{k\,0}=\sqrt{P_u} \sqrt{\omega_k} {s_u}_k$, ${y_u}_{k\,1}=\sqrt{P_u}\left[\bs{W}_u^{BB} \tilde{\bs{G}} \bs{\Omega}^{1/2} \bs{s}_u\right]_k$, ${y_u}_{k\,2} = \left[\bs{W}_u^{BB} \bs{q}_u \right]_k$, and ${y_u}_{k\,3} = \left[\bs{W}_u^{BB} \bs{n}_u\right]_k$, respectively. As in the \gls{DL}, since data symbols, quantization noise, thermal noise, and channel-related coefficients are mutually independent, the terms ${y_u}_{k\,0}$, ${y_u}_{k\,1}$, ${y_d}_{k\,2}$ and ${y_d}_{k\,3}$ are mutually uncorrelated and thus, based on the worst-case uncorrelated additive noise \cite{Hassibi03}, the achievable \gls{UL} rate for user $k$ is lower bounded by ${R_u}_k=\log_2\left(1+{\SINR_u}_k\right)$, with
\begin{equation*}
   {\SINR_u}_k=\frac{\mathbb{E}\left\{\left|{y_d}_{k\,0}\right|^2\right\}}{\mathbb{E}\left\{\left|{y_d}_{k\,1}\right|^2\right\}+\mathbb{E}\left\{\left|{y_d}_{k\,2}\right|^2\right\}+\mathbb{E}\left\{\left|{y_d}_{k\,3}\right|^2\right\}},
\end{equation*}
where $\mathbb{E}\left\{\left|{y_u}_{k\,0}\right|^2\right\}=P_u \omega_k$,
\begin{equation*}
\begin{split}
   &\mathbb{E}\left\{\left|{y_u}_{k\,1}\right|^2\right\}=P_u\mathbb{E}\left\{\bs{s}_u^H\bs{\Omega}^{1/2} \tilde{\bs{G}}^H {\bs{w}_{uk}^{BB}}^H \bs{w}_{uk}^{BB} \tilde{\bs{G}}\bs{\Omega}^{1/2} \bs{s}_u\right\} \\
   &\qquad=P_u \tr\left(\bs{\Omega} \mathbb{E}\left\{\tilde{\bs{G}}^H {\bs{w}_{uk}^{BB}}^H \bs{w}_{uk}^{BB} \tilde{\bs{G}}\right\}\right) \\
   &\qquad=P_u \sum_{k'=1}^K \omega_{k'} \left[\diag\left(\mathbb{E}\left\{\tilde{\bs{G}}^H {\bs{w}_{uk}^{BB}}^H \bs{w}_{uk}^{BB} \tilde{\bs{G}}\right\}\right)\right]_{k'},
\end{split}
\end{equation*}
with $\bs{w}_{uk}^{BB}$ denoting the $k$th row of $\bs{W}_u^{BB}$, or, equivalently,
\begin{equation*}
\begin{split}
   &\mathbb{E}\left\{\left|{y_u}_{k\,1}\right|^2\right\} = P_u\left[\diag\left(\mathbb{E}\left\{\bs{W}_u^{BB} \tilde{\bs{G}} \bs{\Omega} \tilde{\bs{G}}^H {\bs{W}_u^{BB}}^H \right\}\right)\right]_k \\
   &\qquad=P_u \sum_{k'=1}^K \omega_{k'} \left[\diag\left(\mathbb{E}\left\{\bs{W}_u^{BB} \tilde{\bs{g}}_{k'} \tilde{\bs{g}}_{k'}^H {\bs{W}_u^{BB}}^H \right\}\right)\right]_k,
\end{split}
\end{equation*}
and, finally,
\begin{equation*}
\begin{split}
   &\mathbb{E}\left\{\left|{y_u}_{k\,2}\right|^2\right\}=\left[\diag\left(\mathbb{E}\left\{\bs{W}_u^{BB} \bs{q}_u \bs{q}_u^H {\bs{W}_u^{BB}}^H\right\}\right)\right]_k \\
   &\qquad=\sum_{m=1}^M \left[\diag\left(\mathbb{E}\left\{\bs{W}_u^{BB} \bs{q}_{um} \bs{q}_{um}^H {\bs{W}_u^{BB}}^H\right\}\right)\right]_k \\
   &\qquad=\sum_{m=1}^M {\sigma_q^2}_{um}\left[\diag\left(\mathbb{E}\left\{\bs{W}_{u\,m}^{BB} {\bs{W}_{u\,m}^{BB}}^H\right\}\right)\right]_k,
\end{split}
\end{equation*}
and, analogously,
\begin{equation*}
\begin{split}
   \mathbb{E}\left\{\left|{y_d}_{k\,3}\right|^2\right\}=\sigma_u^2(N)\sum_{m=1}^M \left[\diag\left(\mathbb{E}\left\{\bs{W}_{u\,m}^{BB} {\bs{W}_{u\,m}^{BB}}^H\right\}\right)\right]_k.
\end{split}
\end{equation*}

\bibliographystyle{IEEEtran}
\bibliography{Cell_free_Hybrid_precoding}

\end{document}